\DeclareMathAlphabet{\mathitbf}{OML}{cmm}{b}{it}
\newtheorem{theorem}{Theorem}[section]
\newtheorem{claim}{Claim}
\newtheorem{lemma}[theorem]{Lemma}
\newtheorem{corollary}[theorem]{Corollary}
\newcommand{\qed}{\mbox{\ \ \ }\rule{6pt}{7pt} \bigskip}
\renewcommand{\comment}[1]{}
\newenvironment{proof}{\noindent{\em Proof:}}{\hfill\qed}
\newenvironment{oneshot}[1]{\@begintheorem{#1}{\unskip}}{\@endtheorem}
\newcommand{\dist}{F}
\newcommand{\distj}[1][j]{{\dist_{#1}}}
\newcommand{\expect}{\mathop{\operatorname{\bf E}}}
\newcommand{\pr}{\mathop{\operatorname{\bf Pr}}}
\renewcommand{\Pr}{\pr}
\newcommand{\ex}{\expect}
\newcommand{\VCGc}{\ensuremath{\mathrm{BO}_{\overload}}}
\newcommand{\OPT}{\ensuremath{\mathrm {OPT}}}
\newcommand{\OPTh}{\ensuremath{\mathrm{OPT}_{1/2}}}
\newcommand{\OPTr}[1][\partparam]{\ensuremath{\mathrm{OPT}_{#1}}}
\newcommand{\msp}{\text{Makespan}}
\newcommand{\tj}{\alpha_j}
\newcommand{\load}{\eta}
\newcommand{\rank}{r}
\newcommand{\T}{T}
\newcommand{\Tj}{T_j}
\newcommand{\Tji}[1][i]{\Tj\ithofn{#1}{m}}
\newcommand{\Bj}{{B_j}}
\newcommand{\dmax}[2]{#2\maxn{#1}}
\newcommand{\dmin}[2]{#2\minn{#1}}
\newcommand{\dsum}[2]{#2\sumn{#1}}
\newcommand{\Rj}{R_j}
\newcommand{\Lj}{L_j}
\newcommand{\Rhatj}{\tilde{R}_j}
\newcommand{\G}{\mathcal{G}}
\newcommand{\B}{\mathcal{B}}
\newcommand{\overload}{c}
\newcommand{\overloadvalue}{7}
\newcommand{\finalApprox}{200}
\newcommand{\finalApproxMHR}{800} 
\newcommand{\rvw}{W}
\newcommand{\reserve}{\beta}
\newcommand{\partparam}{\delta}
\newcommand{\ithofn}[2]{[#1{:}#2]}
\newcommand{\sumn}[1]{[\Sigma#1]}
\newcommand{\maxn}[1]{\ithofn{#1}{#1}}
\newcommand{\minn}[1]{\ithofn{1}{#1}}
\newcommand{\couplingApprox}{4}
\begin{document}
\title{Prior-Independent Mechanisms for Scheduling}
\author{Shuchi Chawla\thanks{Computer Sciences Dept., University of Wisconsin -
  Madison. \tt{shuchi@cs.wisc.edu}.}
 \and Jason D. Hartline\thanks{EECS, Northwestern
   University. \tt{hartline@eecs.northwestern.edu}.}
 \and David Malec\thanks{Computer Sciences Dept., University
 of Wisconsin - Madison. \tt{dmalec@cs.wisc.edu}.}
 \and Balasubramanian Sivan\thanks{Computer Sciences Dept., University
 of Wisconsin - Madison. \tt{balu2901@cs.wisc.edu}.}
}
\date{}
\maketitle{}
\begin{abstract}
We study the makespan minimization problem with unrelated selfish
machines under the assumption that job sizes are stochastic. We design
simple truthful mechanisms that under various distributional
assumptions provide constant and sublogarithmic approximations to
expected makespan. Our mechanisms are prior-independent in that they
do not rely on knowledge of the job size distributions.
Prior-independent approximation mechanisms have been previously
studied for the objective of revenue maximization~\cite{DRY10, DHKN11,
RTY12}. In contrast to our results, in prior-free settings no truthful
anonymous deterministic mechanism for the makespan objective can
provide a sublinear approximation~\cite{ADL09}.

\end{abstract}
%
%

\thispagestyle{empty}
\newpage

\section{Introduction}
We study the problem of scheduling jobs on machines to minimize
makespan in a strategic context. The makespan the longest it takes any
of the machines to complete the work assigned by the schedule. The
running time or size of a job on a machine is drawn from a fixed
distribution, and is a private input known to the machine but not to
the optimizer. The machines are unrelated in the sense that the
running time of a job on distinct machines may be distinct.  A
scheduling mechanism solicits job running times from the machines and
determines a schedule as well as compensation for each of the
machines. The machines are strategic and try to maximize the
compensation they receive minus the work they perform.  We are
interested in understanding and quantifying the loss in performance
due to the strategic incentives of the machines who may misreport the
job running times.


A primary concern in the theory of mechanism design is to understand
the compatibility of various objectives of the designer with the
incentives of the participants. As an example, maximizing social
welfare is incentive compatible; the Vickrey-Clarke-Groves (VCG)
mechanism obtains this socially optimal outcome in
equilibrium \cite{V61,C71,G73}.  For most other objectives,
however, the optimal solution ignoring incentives (a.k.a.\@ the {\em
first-best} solution) cannot be implemented in an incentive compatible
manner. This includes, for example, the objectives of revenue
maximization, welfare maximization with budgets, and makespan
minimization with unrelated machines.  For these objectives there is
no incentive compatible mechanism that is best on every input.  The
classical economic approach to mechanism design thus considers inputs
drawn from a distribution (a.k.a.\@ the {\em prior}) and looks for the
mechanism that maximizes the objective in expectation over the
distribution (a.k.a.\@ the {\em second-best} solution).

The second-best solution is generally complex and, by definition,
tailored to specific knowledge that the designer has on the
distribution over the private information (i.e., the input) of the
agents.  The non-pointwise optimality, complexity, and distributional
dependence of the second-best solution motivates a number of mechanism
design and analysis questions.
\begin{description}
\item[price of anarchy:] 
For any distribution over inputs, bound the gap between the first-best
(optimal without incentives) and second-best (optimal with incentives)
solutions (each in expectation over the input).

\item[computational tractability:] 
For any distribution over inputs, give a computationally tractable
implementation of the second-best solution, or if the problem is
intractable give a computationally tractable approximation mechanism.

\item[simplicity:]
For any distribution over inputs, give a simple, practical mechanism
that approximates the second-best solution.

\item[prior independence:] 
Give a single mechanism that, for all distributions over inputs,
approximates the second-best solution.
\end{description}
These questions are inter-related. As the second-best mechanism is
often complex, the price of anarchy can be bounded via a lower bound
on the second-best mechanism as given by a simple approximation
mechanism.  Similarly, to show that a mechanism is a good
approximation to second-best the upper bound given by the first-best
solution can be used.  Importantly though, if the first-best solution does
not permit good approximation mechanisms then a better bound on the
second-best solution should be sought.  Each of the questions above
can be further refined by consideration with respect to a large class of
priors (e.g.\@ identical distributions).

The prior-independence question gives a middle ground between
worst-case mechanism design and Bayesian mechanism design.  It
attempts to achieve the best of both worlds in the tradeoff between
informational efficiency and approximate optimality.  Its minimal
usage of information about the setting makes it robust.  A typical
side-effect of this robustness is simple and natural mechanisms;
indeed, our prior-independent mechanisms will be simple,
computationally tractable, and also enable a bound on the price of
anarchy.

The literature on prior-independent mechanism design has focused
primarily on the objective of revenue maximization.  Hartline and
Roughgarden \cite{HR09} show that with sufficient competition, the
welfare maximizing (VCG) mechanism also attains good revenue.  This
result enables the prior-independent approximation mechanism for
single-item auctions of Dhangwatnotai, Roughgarden, and
Yan~\cite{DRY10} and the multi-item approximation mechanisms of
Devanur et al.~\cite{DHKN11} and Roughgarden et al.~\cite{RTY12}.
Importantly, in single-item auctions the agents' private information
is single-dimensional whereas in multi-item auctions it is
multi-dimensional.  There are several interesting and challenging
directions in prior-independent mechanism design: (1) non-linear
objectives, (2) general multi-parameter preferences of agents, (3)
non-downwards-closed feasibility constraints, and (4) non-identically
distributed types of agents. Our work addresses the first three of
these four challenges.

%
%
We study the problem of scheduling jobs on machines where the runtime
of a job on a machine is that machine's private information.  The
prior over runtimes is a product distribution that is symmetric with
respect to the machines (but not necessarily symmetric with respect to
the jobs).  {\em Ex ante}, i.e., before the job sizes are
instantiated, the machines appear identical; {\em ex post}, i.e.,
after the job sizes are realized, the machines are distinct and job
runtimes are unrelated.  The makespan objective is to schedule the
jobs on machines so as to minimize the time at which the last machine
completes all of its assigned jobs.  Our goal is a prior-independent
approximation of the second-best solution for the makespan objective.




To gain intuition for the makespan objective, consider why the simple
and incentive compatible VCG mechanism fails to produce a good
solution in expectation.  The VCG mechanism for scheduling minimizes
the total work done by all of the machines and accordingly places
every job on its best machine. Note that because the machines are a
priori identical, this is an i.i.d.\@ uniformly random machine for
every job. Therefore, in expectation, every machine gets an equal
number of jobs. Furthermore, every job simultaneously has its smallest
size possible. However, the maximum load in terms of the number of
jobs per machine and so also the makespan can be quite large. The
distribution of jobs across machines is akin to the distribution of
balls into bins in the standard balls-in-bins experiment---when the
number of balls and bins is equal, the maximum loaded bin contains
$\Theta(\log n/\log\log n)$ balls with high probability even though
the average load is $1$.

%
%
Our designed mechanism must prevent the above balls-in-bins style
behavior.  Consider a variant of VCG that we call the {\em bounded
overload} mechanism.  The bounded overload mechanism minimizes the
total work with the additional feasibility constraint that the load
(i.e., number of jobs scheduled) of any machine is bounded to be at
most a $c$ factor more than the average load.  This mechanism is
``maximal in range'', i.e., it is simply the VCG mechanism with a
restricted space of feasible outcomes; it is therefore incentive
compatible.  Moreover, the bounded overload mechanism can be viewed as
belonging to a class of ``supply limiting'' mechanisms (cf.\@ the
prior-independent supply-limiting approximation mechanism
of \cite{RTY12} for multi-item revenue maximization).

While the bounded overload mechanism evens out the number of jobs per
machine, an individual job may end up having a running time far larger
than that on its best machine.  The crux of our analysis is to show
that this does not hurt the expected makespan of our schedule relative
to an ideal setting where every job assumes its minimum size.  Our
analysis of job sizes has two components. First we show that every job
with high probability gets assigned to one of its best
machines. Second, we show that the running time of a job on its $i$th
best machine can be related within a factor depending on $i$ to its
running time on its best machine. These components together imply that
the bounded overload mechanism simultaneously obtains a schedule that
is balanced in terms of the number of jobs per machine and where every
job has a small size (in comparison to the best possible for that
job). This is sufficient to imply a constant factor approximation to
expected makespan when the number of jobs is proportional to the
number of machines.


The second component of our analysis of job sizes in the bounded
overload mechanism entails relating different order statistics of
(arbitrary) i.i.d.~distributions, a property that may have broader
applications. In particular, letting $X\ithofn{k}{n}$ denote the $k$th
minimum out of $n$ independent draws from a distribution, we show that
for any $k$ and $n$, $X\ithofn{k}{n}$ is nearly stochastically dominated by
an exponential function of $k$ times $X\minn{n/2}$. In simple terms, the
minimum out of a certain number of draws cannot be arbitrarily smaller
than the $k$th minimum out of twice as many draws.

As an intermediary step in our analysis we bound the performance of our
approximation mechanism with respect to the first-best solution with
half the machines (recall, machines are a priori identical).  Within
the literature on prior-independent revenue maximization this approach
closely resembles the classical Bulow-Klemperer theorem \cite{BK96}.
For auctioning $k$ units of a single-item to $n$ agents (with values
drawn i.i.d.~from a ``nice'' distribution), the revenue from welfare
maximization exceeds the optimal revenue from $n-k$ agents.  In other
words, a simple prior-independent mechanism with extra competition
(namely, $k$ extra agents) is better than the prior-optimal mechanism
for expected revenue.  Our result is similar: when the number of jobs
is at most the number of machines and machines are a priori identical,
we present a prior-independent mechanism that is a constant
approximation to makespan with respect to the first-best (and
therefore also with respect to the second-best) solution with half as
many machines.  Unlike the Bulow-Klemperer theorem we place no
assumptions the distribution of jobs on machines besides symmetry with
respect to machines.  




To design scheduling mechanisms for the case where the number of jobs
is large relative to the number of machines we can potentially take
advantage of the law of large numbers.  If there are many more large
jobs (i.e., jobs for which the best of the machines' runtimes is
significant) then assigning jobs to machines to minimize total work
will produce a schedule where the maximum work on any machine is
concentrated around its expectation; moreover, the expected load of
any machine in the schedule that minimizes total work is at most the
expected load of any machine in the schedule that minimizes makespan.

On the other hand, if there are a moderate number, e.g., proportional
to the number of machines, of jobs with very large runtimes on all
machines, both the minimum work mechanism and the bounded overload
mechanism can fail to have good expected makespan.  For the bounded
overload mechanism, although the distribution of jobs across machines
is more-or-less even, the distribution of the few ``worst'' jobs that
contribute the most to the makespan may be highly uneven.  Indeed, for
a distribution where the expected number of large jobs is about the
same as the number of machines, the bounded overload mechanism
exhibits the same bad balls-in-bins behavior as the minimum work
mechanism.

The problem above is that the existence of many small, but relatively
easy to schedule jobs, prevents the bounded overload mechanism from
working.  To solve this problem we employ a two stage approach.  The
first stage acts as a sieve and schedules the small jobs to minimize
total work and while leaving the large jobs unscheduled.  Then in the
second stage the bounded overload mechanism is run on the unscheduled
jobs.   With the proper parameter tunings (i.e., job size threshold for
the sieve and partitioning of machines to the two stages) this
mechanism gives a schedule with approximately optimal expected
makespan.  We give two parameter tunings and analyses, one which gives
an $O(\sqrt{\log m})$ approximation and the other that gives an
$O((\log\log m)^2)$ approximation under a certain tail condition on
the distribution of job sizes (satisfied, for example, by all monotone
hazard rate distributions).

The proper tuning of the parameters of the mechanism require knowledge
of a single order statistic of the size distribution, namely the
expected size of a job on its best out of $k$ machines for an
appropriate value of $k$, to decide which jobs get scheduled in which
stage. This statistic can be easily estimated as the mechanism is
running by using the reports of a small fraction of the machines as
a ``market analysis.''  To keep our exposition and analysis simple, we
skip this detail and assume that the statistic is known.

\subsection*{Related work}

%
%
There is a large body of work on prior-free mechanism design for the
makespan objective.  This work does not assume a prior distribution,
instead it looks at worst-case approximation of the first-best
solution (i.e., the optimal makespan without incentive constraints).
The problem was introduced by Nisan and Ronen~\cite{NR99} who showed
that the minimum work (a.k.a.\@ VCG) mechanism gives an
$m$-approximation to makespan (where $m$ is the number of machines).
They gave a lower bound of two on the worst case approximation factor
of any dominant strategy mechanism for unrelated machine scheduling.
They conjectured that the best worst-case approximation is indeed
$\Theta(m)$.  Following this work, a series of papers presented better
lower bounds for deterministic as well as randomized
mechanisms~\cite{CKV07, CKK07, KV07, MS07}.  Ashlagi, Dobzinski and
Lavi~\cite{ADL09} recently proved a restricted version of the
Nisan-Ronen conjecture by showing that no {\em anonymous}
deterministic dominant-strategy incentive-compatible mechanism can
achieve a factor better than $m$.  This lower bound suggests that the
makespan objective is fundamentally incompatible with incentives in
prior-free settings.
In this context, our work
can be viewed as giving a meaningful approach for obtaining positive 
results that are close to prior-free for a problem for which most
results are very negative.

Given these strong negative results, several special cases of the
problem have been studied. Lavi and Swamy~\cite{LS07} give constant
factor approximations when job sizes can take on only two different
values. Lu and Yu~\cite{LY08, LY08-2, Lu09} consider the problem over
two machines, and give approximation ratios strictly better than $2$.

%
%
Related machine scheduling is the special case where the runtime of a
job on a machine is the product of the machine's private speed and the
job's public length.  Importantly, the private information of each
machine in a related machine scheduling problem is single-dimensional,
and the total length of the jobs assigned to any given machine in the
makespan minimizing schedule is monotone in the machine's speed.  This
monotonicity implies that the related machine makespan objective is
incentive compatible (i.e., the price of anarchy is one).  For this
reason work on related machine scheduling has focused on computational
tractability.  Archer and Tardos \cite{AT01} give a constant
approximation mechanism and Dhangwotnotai et al.~\cite{DDDR08} give an
incentive compatible polynomial time approximation scheme thereby
matching the best approximation result absent incentives.  There are
no known approximation-preserving black-box reductions from mechanism
design to algorithm design for related machine scheduling; moreover,
in the Bayesian model Chawla, Immorlica, and Lucier~\cite{CIL12}
recently showed that the makespan objective does not admit black-box
reductions of the form that Hartline and Lucier \cite{HL10} showed
exist for the objective of social welfare maximization.


Another line of work studies the makespan objective subject to an
envy-freedom constraint instead of the incentive-compatibility
constraint.  A schedule and payments (to the machines) are envy free if
every machine prefers its own assignment and payment to that of
any other machine.  Mu'alem \cite{M09} introduced the envy-free
scheduling problem for makespan.  Cohen et al.~\cite{CFFKO10} gave a
polynomial time algorithm for computing an envy-free schedule that is
an $O(\log m)$ approximation to the first-best makespan (i.e., the
optimal makespan absent envy-freedom constraints).  Fiat and
Levavi~\cite{FL12} complement this by showing that the optimal
envy-free makespan (a.k.a. second-best makespan) can be an
$\Omega(\log m)$ factor larger than the first-best makespan.

\section{Preliminaries and main results}

\label{sec:prelim}

We consider the scheduling of $n$ jobs on $m$ unrelated machines where
the running time of a job on a machine is drawn from a distribution.
A {\em schedule} is an assignment of each job to exactly one machine.
The {\em load} of a machine is the number of jobs assigned to it.  The
{\em load factor} is the average number of jobs per machine and
is denoted $\load = n/m$.  The {\em work} of a machine is the sum of the
runtimes of jobs assigned to it.  The {\em total work} is the sum of
the works of each machine.  The {\em makespan} is the most work assigned
to any machine.

The vector of running times for each of the jobs on a given machine is
that machine's private information.  A scheduling mechanism may
solicit this information from the machines, may make payments to the
machines, and must select a schedule of jobs on the machines.  A
scheduling mechanism is evaluated in the equilibrium of strategic
behavior of the machines.  A particularly robust equilibrium concept
is dominant strategy equilibrium.  A scheduling mechanism is {\em
  incentive compatible} if it is a dominant strategy for each machine
to report its true processing time for each job.

We consider the following simple mechanisms:
\begin{description}
\item[minimum work]
 The minimum work mechanism solicits the running times, selects the
 schedule to minimize the total work and pays each machine its
 externality, i.e., the difference between the minimum total work when
 the machine does nothing and the total work of all other machines in
 the selected schedule.
\item[bounded overload] The bounded overload mechanism is
  parameterized by an overload factor $\overload > 1$ and is
  identical to the minimum work mechanism except it optimizes subject
  to placing at most $\overload\load$ jobs on any machine.
\item[sieve / anonymous reserve] 
The sieve mechanism, also known as the anonymous reserve mechanism, is
parameterized by a reserve $\reserve \geq 0$ and is identical to the
minimum work mechanism except that there is a dummy machine added with
runtime $\reserve$ for all jobs.  Jobs assigned to the dummy machine
are considered unscheduled.
\item[sieve and bounded overload] The sieve and bounded overload
  mechanism is parameterized by overload $\overload$, reserve
  $\reserve$, and a partition parameter $\partparam$.  It partitions
  the machines into two sets of sizes $(1-\partparam) m$ and
  $\partparam m$.  It runs the sieve with reserve $\reserve$ on the
  first set of machines and runs the bounded overload mechanism with
  overload $\overload$ on the unscheduled jobs and the second set of
  machines.
\end{description}
The above mechanisms are incentive compatible.  The minimum work
mechanism is incentive compatible as it is a special case of the well
known Vickrey-Clarke-Groves (VCG) mechanism which is incentive
compatible.  The bounded overload mechanism is what is known as a
``maximal in range'' mechanism and is also incentive compatible (by
the VCG argument).  The sieve / anonymous reserve mechanism is
incentive compatible because the incentives of the agents in the
minimum work mechanism are unaffected by the addition of a dummy
agent.  Finally, the sieve and bounded overload mechanism is incentive
compatible because from each machine's perspective it is either
participating in the sieve mechanism or the bounded overload
mechanism.

The runtimes of jobs on machines are drawn from a product distribution
(a.k.a., the {\em prior}) that is symmetric with respect to the
machines.  (Therefore, the running times of a job on each machine are
i.i.d.\@ random variables.)  The distribution of job $j$ on any
machine is denoted $\distj$; a draw from this distribution is denoted
$\Tj$.  The {\em best runtime} of a job is its minimum runtime over
all machines, this first order statistic of $m$ random draws from
$\distj$ is denoted by $\Tji[1]$.

Our goal is to exhibit a mechanism that is prior-independent and a
good approximation to the expected makespan of the best incentive
compatible mechanism for the prior, i.e., the second-best solution.
Because both the second-best and the first-best expected makespans are
difficult to analyze, we will give our approximation via one of the
following two lower bounds on the first-best solution.
\begin{description}
\item[expected worst best runtime]  
The expected worst best runtime is the expected value of the best runtime of
the job with the longest best runtime, i.e., $\expect[\max_j \Tji[1]]$
\item[expected average best runtime]
The expected average best runtime is the expected value of the sum of the
best runtimes of each job averaged over all machines, i.e.,
$\expect[\sum_j \Tji[1]] / m$.
\end{description}
Intuitively, the former gives a good bound when the load factor is
small, the latter when the load factor is large.  We will refer to any
of these bounds on the first-best makespan as OPT, with the
assumption that which of the bounds is meant, if it is important, is
clear from the context.

As an intermediary in our analysis of the makespan of our scheduling
mechanisms with respect to OPT, we will give bicriteria results that
compare our mechanism's makespan to the makespan of an optimal
schedule with fewer machines.  This restriction is well defined
because the machines are a prior identical.  For a given parameter
$\partparam$, $\OPTr$ will denote the optimal schedule with
$\partparam m$ machines (via bounds as described above).  Much of our
analysis will be with respect to $\OPTh$, i.e., the optimal schedule
with half the number of machines.

While it is possible to construct distributions where $\OPT$ is much
smaller than $\OPTh$, for many common distributions they are quite
close.  In fact, for the class of distributions that satisfy the
monotone hazard rate (MHR) condition,\footnote{The hazard rate of a
distribution $F$ is given by $h(x)=\frac{f(x)}{1-F(x)}$, where $f$ is
the probability density function for $F$; a distribution $F$ satisfies
the MHR condition if $h(x)$ is non-decreasing in $x$. Many natural
distributions such as the uniform, Gaussian, and exponential
distributions, satisfy the monotone hazard rate
condition. Intuitively, these are distributions with tails no heavier
than the exponential distribution.} $\OPT$ and $\OPTh$ are always
within a factor of four; more generally $\OPT$ and $\OPTr$ are within
a factor of $1/\partparam^2$ for these distributions. (See proof in
Section~\ref{sec:probabilistic}.)

\begin{lemma}
\label{lem:MHR}
  When the distributions of job sizes have monotone hazard rates the
  expected worst best and average best runtimes on $\partparam m$ machines are
  no more than $1/\partparam^2$ times the expected worst best and average best
  runtimes, respectively, on $m$ machines. 
\end{lemma}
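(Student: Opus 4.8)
The plan is to establish a stronger, per-job statement and then aggregate: for each job $j$ there is a coupling of its best runtime on $\partparam m$ machines with its best runtime on $m$ machines under which the former is at most $1/\partparam$ times the latter, pointwise. Granting this, both halves of the lemma follow at once. For the average best runtime, summing the pointwise inequality over $j$, taking expectations, and then normalizing the $\partparam m$-machine quantity by $\partparam m$ rather than by $m$ loses one extra factor $1/\partparam$, for a total of $1/\partparam^2$. For the worst best runtime, taking the maximum over $j$ of the per-job inequalities --- legitimate because the jobs are independent, so the individual couplings can be run on independent randomness and then all hold simultaneously --- and taking expectations loses only a factor $1/\partparam \le 1/\partparam^2$.

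To build the coupling I use the hazard rate. Write the survival function of $\distj$ as $1-\distj(x)=e^{-H_j(x)}$, where $H_j(x)=\int_0^x h_j(t)\,dt$ is the cumulative hazard rate; $H_j$ is non-decreasing, vanishes at the left endpoint $\ell_j\ge 0$ of the support, and tends to $\infty$. The key point is that the monotone hazard rate condition says exactly that $h_j$ is non-decreasing, i.e.\@ that $H_j$ is convex; hence its (generalized, left-continuous) inverse $H_j^{-1}$ is a non-decreasing, concave function on $[0,\infty)$ with $H_j^{-1}(0)=\ell_j\ge 0$. Now draw $E_j\sim\mathrm{Exp}(1)$ and set $Y_j^{(k)}:=H_j^{-1}(E_j/k)$; a one-line computation gives $\Pr[\,Y_j^{(k)}>x\,]=\Pr[\,E_j>kH_j(x)\,]=e^{-kH_j(x)}=(1-\distj(x))^k$, so $Y_j^{(k)}$ has precisely the distribution of the minimum of $k$ i.i.d.\@ draws from $\distj$, i.e.\@ of the best runtime of job $j$ on $k$ machines; in particular $Y_j^{(m)}$ has the law of $\Tji[1]$.

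It then remains only to invoke the elementary inequality that a concave $\psi$ on $[0,\infty)$ with $\psi(0)\ge 0$ satisfies $\psi(u/\partparam)\le\tfrac1\partparam\,\psi(u)$ for every $\partparam\in(0,1]$ and $u\ge 0$ --- indeed $\psi(u)=\psi\big(\partparam\cdot\tfrac u\partparam+(1-\partparam)\cdot 0\big)\ge\partparam\,\psi(u/\partparam)+(1-\partparam)\,\psi(0)\ge\partparam\,\psi(u/\partparam)$. Applying this with $\psi=H_j^{-1}$ and $u=E_j/m$ yields, with probability one, $Y_j^{(\partparam m)}=H_j^{-1}\big(\tfrac1\partparam\cdot\tfrac{E_j}{m}\big)\le\tfrac1\partparam\,H_j^{-1}\big(\tfrac{E_j}{m}\big)=\tfrac1\partparam\,Y_j^{(m)}$, which is the per-job coupling claimed above. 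The conceptual crux is thus the observation that the MHR assumption is precisely convexity of the cumulative hazard, which is what forces the best-of-$k$ runtime to scale like $1/k$; the only thing left to watch is the measure-theoretic bookkeeping when $\distj$ has bounded support or stretches on which the density vanishes, so that $H_j$ is not strictly increasing --- there one reads $H_j^{-1}(s)$ as $\inf\{x:H_j(x)\ge s\}$ and checks that it is still non-decreasing and concave and that the distributional identity above persists (using continuity of $H_j$ and of $E_j$). This bookkeeping is routine, and I expect it to be the only mild obstacle.
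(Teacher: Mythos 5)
Your argument is correct, and at its core it is the same argument as the paper's: both proofs ultimately rest on the fact that the MHR condition makes the cumulative hazard $H_j(t)=\int_0^t h_j(z)\,dz$ convex, so that $r\int_0^{t/r}h_j \le \int_0^t h_j$ (paper) or, dually, $H_j^{-1}(u/\partparam)\le H_j^{-1}(u)/\partparam$ (yours). The packaging differs, though. The paper factors the work into a reusable stochastic-dominance lemma --- for an MHR variable $X$, $X$ is dominated by $r\dmin{r}{X}$ --- and then needs an auxiliary closure claim (the minimum of $k$ i.i.d.\ MHR draws is again MHR, with hazard rate $k\,h$) in order to apply that lemma to $X=\dmin{\partparam m}{\Tj}$ with $r=1/\partparam$, using the identity $\dmin{m}{\Tj}=\dmin{1/\partparam}{\dmin{\partparam m}{\Tj}}$. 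You instead realize all the first order statistics simultaneously from a single $\mathrm{Exp}(1)$ variable via $Y_j^{(k)}=H_j^{-1}(E_j/k)$, which turns the dominance into a pointwise almost-sure inequality and makes both the closure claim and the intermediate lemma unnecessary; it also makes the independence-across-jobs step (needed to pass the coupling through $\max_j$) completely transparent. The bookkeeping of where the single factor $1/\partparam$ versus $1/\partparam^2$ arises (max versus normalized sum) matches the paper's exactly. The only caveats are the routine ones you already flag: the generalized inverse when $H_j$ is not strictly increasing, and implicitly that the distributions admit densities so that the representation $1-\distj(x)=e^{-H_j(x)}$ is valid --- the same assumption the paper's own proof of its auxiliary lemma makes.
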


\subsection{Main Results}
Our main theorems are as follows.  When the number of jobs is
comparable to the number of machines, i.e., the load factor $\load$ is
constant, then the bounded overload mechanism is a good approximation
to the optimal makespan on $m/2$ machines.

\begin{theorem}
\label{thm:main-noniid} 
For $n$ jobs, $m$ machines, load factor $\load = n/m$, and runtimes
distributed according to a machine-symmetric product distribution, the
expected makespan of the bounded overload mechanism with overload
$\overload = \overloadvalue$ is a $\finalApprox\load$ approximation to the expected worst
best runtime, and hence also to the optimal makespan, on $m/2$
machines. 
\end{theorem}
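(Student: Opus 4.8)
The plan is to decompose the makespan of the bounded overload mechanism into two contributions: the \emph{load} (number of jobs on the most-loaded machine) and the \emph{per-job size inflation} (how much larger a job is on the machine it lands on compared to its best runtime), and to bound each separately against $\OPTh$'s expected worst best runtime. First I would observe that, by the definition of the mechanism, every machine receives at most $\overload \load$ jobs deterministically, so the load contribution is already controlled up to the constant $\overload = \overloadvalue$; the real work is in bounding sizes. For the size bound I would invoke the two probabilistic ingredients advertised in the introduction: (i) with high probability every job is assigned to \emph{one of its best few machines} under the minimum-work-with-overload objective---intuitively, since the overload constraint only forces a job off its best machine when that machine is already saturated with other jobs that also rank it highly, and a counting/union-bound argument shows this is rare; and (ii) the order-statistic comparison result, namely that $\Tji[k]$ (the $k$th smallest of $m$ i.i.d.\@ draws from $\distj$) is nearly stochastically dominated by $e^{O(k)}\,\Tji[1:m/2]$, so landing on the $k$th-best machine inflates a job's size by only an exponential-in-$k$ factor relative to the quantity defining $\OPTh$.

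Next I would combine these: condition on the high-probability event that each job $j$ is on one of its, say, top $O(\log m)$ machines; on that event each job's realized size is at most $e^{O(\log m)}$... no---more carefully, I expect the argument routes the bound so that only a \emph{constant in expectation} number of jobs per machine are ``displaced'' to a rank worse than constant, and the contribution of a job at rank $k$ is charged against the $e^{-\Omega(k)}$ probability that displacement reaches that far, so the expected size summed over the (at most $\overload\load$) jobs on any fixed machine telescopes to $O(\load)$ times $\expect[\max_j \Tji[1:m/2]]$. Taking a union bound over the $m$ machines for the makespan (the max over machines of work) and handling the low-probability complementary event by a crude worst-case bound absorbed into the constant, I would conclude the expected makespan is $O(\load)\cdot\OPTh$, with the constants chosen to total $\finalApprox$. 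Finally, since the expected worst best runtime on $m/2$ machines is a lower bound on the first-best (hence second-best) makespan on $m/2$ machines, the approximation statement follows.

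The main obstacle, I expect, is step (i)---showing each job lands on one of its best machines with the right high-probability decay---because the bounded overload constraint couples the assignments of \emph{all} jobs simultaneously through a global optimization, so one cannot reason about each job in isolation. The delicate point is to argue that the min-work-with-overload optimal solution does not gratuitously move a job far down its preference list: one must show that any such displacement is ``forced'' by contention from other jobs, quantify that contention via the balls-in-bins intuition (a machine is saturated only if $\Omega(\load)$ jobs all rank it among their best), and convert this into a tail bound on displacement rank that is strong enough (exponential in the rank) to survive both the per-machine sum over $\overload\load$ jobs and the union bound over $m$ machines. Getting the interplay between this combinatorial structure and the order-statistic inflation factor $e^{O(k)}$ to balance---so that the product of ``probability of displacement to rank $k$'' and ``size inflation at rank $k$'' is summable---is the crux.
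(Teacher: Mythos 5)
Your high-level architecture matches the paper's: bound the makespan crudely by $\overload\load$ times the maximum realized job size, show that the rank $\Rj$ of the machine job $j$ lands on has a geometric tail with failure rate $1/\overload$, inflate the size at rank $k$ by a factor exponential in $k$ relative to $\Bj=\Tj\ithofn{1}{m/2}$, and choose $\overload=\overloadvalue>4$ so that the product of tail probability and inflation is summable. You also correctly identify the crux. But the crux is exactly what is missing, and the route you sketch for it is not workable as stated. You propose to argue directly about the global min-work-with-overload optimum, saying a machine is saturated only if $\Omega(\load)$ jobs all rank it among their best; that is false (which machines are full in the optimum is not determined by rankings), and the coupling you worry about is real. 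The paper sidesteps it with a \emph{last entry} device: schedule all jobs except $j$ by bounded overload, then place $j$ on the first machine in its preference list with a free slot. An exchange argument using the fact that bounded overload minimizes total work shows $\Rj\le\Lj$ pointwise; and since after scheduling $n-1$ jobs at most a $1/\overload$ fraction of machines can be full, while $j$'s preference order is an independent uniformly random permutation (deferred decisions), $\Lj$ is dominated by a geometric variable with failure rate $1/\overload$. Without this (or an equivalent) argument, step (i) of your plan has no proof.

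Two further gaps. First, the ranks $\Rj$, and hence the sizes $\Tji[\Rj]$, are correlated across jobs through the global optimization, so per-job marginal bounds do not bound $\expect[\max_j \Tji[\Rj]]$; the paper needs the correlation-gap lemma ($\expect[\max_j X_j]\le\tfrac{e}{e-1}\expect[\max_j Y_j]$ for independent $Y_j$ with the same marginals) to replace the correlated ranks by independent geometric ones. Your alternative of summing expected sizes machine by machine and union-bounding over the $m$ machines, with ``a crude worst-case bound'' absorbing the bad event, fails here: the distributions are arbitrary and job sizes unbounded, so there is no worst-case bound to absorb, and a union bound controls a tail probability, not the expectation of the maximum. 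Second, the order-statistic lemma does not give scalar domination $\Tji[k]\preceq e^{O(k)}\Bj$ (that form holds only under MHR); it gives domination by $\max\bigl(\tj,\dmax{\couplingApprox^{k}}{\Bj}\bigr)$, i.e., the max of $\couplingApprox^{k}$ independent copies of $\Bj$ plus a threshold term. One then needs an additional lemma to pull the random number of copies out of the max over $j$, namely $\expect[\max_j\dmax{K_j}{\rvw_j}]\le\tfrac{c}{c-1}\expect[K_1]\expect[\max_j \rvw_j]$, and must separately charge $\max_j\tj\le 2\,\OPTh$. These are not cosmetic details: they are where the independence structure and the generality of the distributions are actually paid for.
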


\begin{corollary}
  Under the assumptions of Theorem~\ref{thm:main-noniid} where
  additionally the distributions of job sizes have monotone hazard
  rates, the expected makespan of the bounded overload mechanism with
  $\overload = \overloadvalue$ is a $\finalApproxMHR\load$ approximation
  to the expected optimal
  makespan. 
\end{corollary}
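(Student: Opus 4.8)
The plan is simply to compose Theorem~\ref{thm:main-noniid} with Lemma~\ref{lem:MHR}; there is no genuinely new argument, since both ingredients are already in hand, and the corollary merely exploits the extra MHR hypothesis to convert the $m/2$-machine benchmark of Theorem~\ref{thm:main-noniid} into the $m$-machine benchmark. I would carry this out in three short steps.

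First, apply Theorem~\ref{thm:main-noniid}: for any machine-symmetric product distribution, the expected makespan of the bounded overload mechanism with $\overload = \overloadvalue$ is at most $\finalApprox\load \cdot \OPTh$, where $\OPTh$ is the expected worst best runtime on $m/2$ machines. Second, invoke Lemma~\ref{lem:MHR} with partition parameter $\partparam = 1/2$: because every $\distj$ has a monotone hazard rate, the expected worst best runtime on $\partparam m = m/2$ machines is at most $1/\partparam^2 = 4$ times the expected worst best runtime on $m$ machines. Third, recall from Section~\ref{sec:prelim} that the expected worst best runtime on $m$ machines, $\expect[\max_j \Tji[1]]$, is a lower bound on the first-best expected makespan on $m$ machines, which in turn is a lower bound on the second-best expected makespan. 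Chaining the three inequalities shows that the expected makespan of the bounded overload mechanism is at most $4 \cdot \finalApprox\load = \finalApproxMHR\load$ times the expected optimal makespan, which is exactly the claimed bound.

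Since every step is a direct appeal to a previously established statement, there is no real obstacle here; the only things that need care are bookkeeping. I would double-check that the benchmark in Theorem~\ref{thm:main-noniid} is indeed the expected worst best runtime on $m/2$ machines (it is, as stated), that Lemma~\ref{lem:MHR} is applied with $\partparam = 1/2$ so the factor is precisely $1/\partparam^2 = 4$ (and not, e.g., $1/\partparam$), and that the constants multiply out as $\finalApprox \times 4 = \finalApproxMHR$. I would also remark that the MHR assumption enters only through Lemma~\ref{lem:MHR}: for any other class of distributions admitting a version of that lemma with factor $\kappa$ in place of $1/\partparam^2$, the identical argument yields a $\kappa\,\finalApprox\load$ approximation to the expected optimal makespan.
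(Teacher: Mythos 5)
Your proposal is correct and is exactly the intended (and essentially only) argument: compose Theorem~\ref{thm:main-noniid} with Lemma~\ref{lem:MHR} at $\partparam=1/2$ and use the fact that the expected worst best runtime on $m$ machines lower-bounds the optimal makespan, giving $4\times\finalApprox=\finalApproxMHR$. The paper leaves this corollary unproved precisely because it is this immediate chaining, and your bookkeeping of the constants and of the direction of the inequality in Lemma~\ref{lem:MHR} is right.
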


When the load factor $\load$ is large and the job runtimes are
identically distributed, the sieve and bounded overload mechanism is a
good approximation to the optimal makespan.  The following theorems
and corollaries demonstrate the sieve and bounded overload mechanism
under two relevant parameter settings.

\begin{theorem}\label{thm:replication}
  For $n$ jobs, $m$ machines, and runtimes from an i.i.d.\@
  distribution, the expected makespan of the sieve and bounded
  overload mechanism with overload $\overload = \overloadvalue$, partition
  parameter $\partparam=2/3$, and reserve~$\reserve = \frac{n}{m\log m}
  \expect[\dmin{\frac{\partparam}{2} m}{\T}]$ is an $O(\sqrt {\log
    m})$ approximation to the larger of the expected worst best
  and average best runtime, and hence also to the optimal
  makespan, on $m/3$ machines. Here $\T$ denotes a draw from the
  distribution on job sizes. 
\end{theorem}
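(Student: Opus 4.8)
The plan is to bound separately the makespan on the $(1-\partparam)m$ ``sieve'' machines and the makespan on the $\partparam m$ ``overload'' machines (with $\partparam=2/3$ both $(1-\partparam)m$ and $\frac{\partparam}{2}m$ equal $m/3$, and $\partparam m = 2m/3$); the mechanism's makespan is the larger of these two. Write $\load=n/m$ and $\mu=\expect[\dmin{m/3}{\T}]$, so $\reserve=\frac{\load}{\log m}\,\mu$, and note that both benchmarks dominate $\load\mu$ up to constants: the average best runtime on $m/3$ machines is $\frac{n}{m/3}\mu=3\load\mu$, and trivially $\expect[\max_j\dmin{m/3}{\Tj}]\ge\mu$. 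I would also split on whether $\load\lesssim\sqrt{\log m}$: when $\load$ is this small the overload sub-instance already has load factor $O(\sqrt{\log m})$, so Theorem~\ref{thm:main-noniid} applied to the overload stage (conditioned on the set of unscheduled jobs, whose sizes on the overload machines remain i.i.d.) together with a trivial bound on the sieve stage already yields the claim; the work below is the large-load case.

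\emph{Sieve stage.} Each scheduled job has running time at most $\reserve$ on the machine it lands on, so a sieve machine's work is a sum of independent $[0,\reserve]$-valued terms whose expectations sum, over all $m/3$ sieve machines, to $\expect[\sum_{j\ \mathrm{small}}\dmin{m/3}{\Tj}]\le n\mu$; thus each machine has expected work at most $3\load\mu$. By Bernstein's inequality and a union bound over the $m/3$ machines, with probability $1-1/\poly(m)$ every sieve machine has work at most $3\load\mu+O\!\left(\sqrt{\reserve\cdot\load\mu\cdot\log m}+\reserve\log m\right)$, and the identity $\reserve\log m=\load\mu$ makes both error terms $\Theta(\load\mu)$. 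Hence the sieve-stage makespan is $O(\load\mu)=O(\OPTr[1/3])$ with high probability.

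\emph{Overload stage.} A job is ``large'' (left unscheduled) exactly when its best running time over the $m/3$ sieve machines exceeds $\reserve$, an event that depends only on its sieve runtimes and hence is independent of its runtimes on the $2m/3$ overload machines; so, conditioning on the set $S$ of large jobs, the overload stage is a bounded-overload instance on an i.i.d.\ job-size distribution over $2m/3$ machines, to which the analysis behind Theorem~\ref{thm:main-noniid} applies. To improve on that theorem's crude $O(\load)$ factor I would use: (i) $\expect[|S|]\le n\mu/\reserve=m\log m$ (Markov, using $\reserve\ge\mu$ when $\load\ge\log m$), and more generally $\expect[\#\{j\in S:\dmin{m/3}{\Tj}>t\}]\le n\,\Pr[\dmin{m/3}{\T}>\reserve]\,\Pr[\dmin{m/3}{\T}>t]$ for every $t$, again by independence; and (ii) the near-stochastic domination of order statistics from the introduction (the bound on $X\ithofn{k}{n}$ by an exponential in $k$ times $X\minn{n/2}$), which bounds the size at which bounded overload places a job by a constant-controlled multiple of its own best overload runtime, even when the load cap pushes it off its best machine. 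I would then split the large jobs at a size threshold of order $\OPTr[1/3]/\log m$: the ``cheap'' jobs below the threshold contribute only $O(\OPTr[1/3])$ work per machine by a Bernstein bound over the i.i.d.\ sizes on a machine; the few ``doubly hard'' jobs above it — few because of (i) and the calibration of $\reserve$ — contribute $O(\sqrt{\log m}\cdot\OPTr[1/3])$ per machine from their balls-into-bins fluctuation across the $2m/3$ machines, with the bounded-overload cap absorbing an extra factor precisely in the range where it binds. Combining, the overload-stage makespan is $O(\sqrt{\log m}\cdot\OPTr[1/3])$ with high probability.

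Taking the maximum over the two stages, and converting the high-probability statements to a bound in expectation (the failure events have probability $1/\poly(m)$, and on them the makespan is at most that of the trivial schedule placing each job on its best machine), yields the desired $O(\sqrt{\log m})$ approximation to $\max\!\big(\expect[\max_j\dmin{m/3}{\Tj}],\,3\load\mu\big)\le\OPTr[1/3]$. The main obstacle I anticipate is the overload stage in the large-load case: one must control at once the number of large jobs (Markov caps it only at $\Theta(m\log m)$), the size blow-up when bounded overload places a job off its best machine, and the concentration of per-machine work for an arbitrary, possibly heavy-tailed, i.i.d.\ job-size distribution; it is the interplay of the reserve's magnitude with the balls-into-bins fluctuations of the ``doubly hard'' jobs that yields $\sqrt{\log m}$ rather than a constant or a full $\log m$, and choosing the size-split and rank cut-offs to make that interplay work out is the delicate step.
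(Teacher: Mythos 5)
Your top-level decomposition (bound the sieve stage and the overload stage separately and combine) is the paper's, and your sieve-stage analysis is essentially the paper's Lemma~\ref{lem:sieve}: a Chernoff/Bernstein bound on each machine's work using the cap $\reserve$ on scheduled sizes, a union bound over machines, and a conversion from a tail bound to a bound in expectation. The divergence --- and the gap --- is in the overload stage. The paper's proof is deliberately trivial there: it chooses the sieve's reserve so as to \emph{balance} the two stages, i.e., it invokes Lemma~\ref{lem:sieve} with $k=\sqrt{\log m}$ (reserve of order $\frac{n}{m\sqrt{\log m}}\expect[\dmin{m}{\T}]$), which makes the sieve an $O(\log m/k)=O(\sqrt{\log m})$ approximation \emph{and} leaves only $km=m\sqrt{\log m}$ jobs unscheduled in expectation. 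The overload stage then has load factor $O(\sqrt{\log m})$, and Theorem~\ref{thm:main-noniid} applies as a black box to give another $O(\sqrt{\log m})$ factor. (The reserve printed in the theorem statement, with $\log m$ in the denominator, is inconsistent with this --- it matches the parametrization of Theorem~\ref{thm:llm} --- but the discussion preceding the theorem and the proof itself make clear that $k=\sqrt{\log m}$ is intended.)

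By taking the stated reserve literally you land on $k=\log m$: the sieve becomes a constant-factor approximation but leaves $\Theta(m\log m)$ jobs unscheduled, so the overload stage has load factor $\Theta(\log m)$ and Theorem~\ref{thm:main-noniid} only yields $O(\log m)$. Everything you propose to recover $O(\sqrt{\log m})$ from there --- the size threshold at $\OPTr[1/3]/\log m$, the claim that sub-threshold large jobs contribute $O(\OPTr[1/3])$ per machine, the claim that the ``doubly hard'' jobs contribute only $O(\sqrt{\log m}\cdot\OPTr[1/3])$ via balls-into-bins fluctuations, and the claim that the overload cap ``absorbs an extra factor precisely in the range where it binds'' --- is asserted rather than proved, and none of it follows from the tools the paper develops. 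In particular, the only control the paper has on the size at which bounded overload places a job off its best machine is Lemma~\ref{lem:stoch-dom}, whose blow-up is exponential in the rank and is tamed only in expectation through the machinery of Section~\ref{sec:small-m}, which inherently delivers a bound proportional to the load factor (here $\log m$), not its square root; your sketch gives no mechanism for converting the Markov-style count of super-threshold jobs into a per-machine work bound for an arbitrary heavy-tailed distribution. The fix is not to strengthen the overload analysis but to retune the reserve: with $k=\sqrt{\log m}$ both stages are $O(\sqrt{\log m})$ and the theorem follows from Lemma~\ref{lem:sieve} and Theorem~\ref{thm:main-noniid} directly, using (as you correctly observe) the independence of the unscheduled set from the runtimes on the overload machines.
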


\begin{corollary}
  Under the assumptions of Theorem~\ref{thm:replication} where
  additionally the distribution of job sizes has monotone hazard rate,
  the expected makespan of the sieve and bounded overload mechanism is
  an $O(\sqrt{\log m})$ approximation to the expected optimal makespan.
\end{corollary}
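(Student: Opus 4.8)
The plan is to obtain this corollary essentially for free by composing Theorem~\ref{thm:replication} with Lemma~\ref{lem:MHR}; no genuinely new argument is required. Theorem~\ref{thm:replication} already states that, with the prescribed parameters ($\overload = \overloadvalue$, $\partparam = 2/3$, and the given reserve), the expected makespan of the sieve and bounded overload mechanism is $O(\sqrt{\log m})$ times the larger of the expected worst best runtime and the expected average best runtime on $m/3$ machines. So it suffices to show that, under the MHR assumption, this ``larger of'' quantity on $m/3$ machines is within a constant factor of the expected optimal makespan $\OPT$ on $m$ machines.

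First I would apply Lemma~\ref{lem:MHR} with $\partparam = 1/3$: since the job size distributions have monotone hazard rates, both the expected worst best runtime and the expected average best runtime on $(1/3)m$ machines are at most $1/\partparam^2 = 9$ times the corresponding quantities on $m$ machines. Because $\max$ is monotone and the factor $9$ is uniform across the two terms, taking the larger of the two on each side gives that the ``larger of'' quantity on $m/3$ machines is at most $9$ times the ``larger of'' quantity on $m$ machines. Next I would recall that each of these two quantities on $m$ machines is, by definition, a lower bound on the expected first-best makespan: any schedule must place the job with the longest best runtime on some machine, and the makespan times $m$ is at least the total best runtime $\expect[\sum_j \Tji[1]]$. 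Hence the maximum of the two is also at most $\OPT$.

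Chaining the three inequalities, the expected makespan of the mechanism is at most $O(\sqrt{\log m}) \cdot 9 \cdot \OPT = O(\sqrt{\log m}) \cdot \OPT$, which is the claimed bound. The only point that warrants a sentence of care --- and the closest thing to an ``obstacle'' in what is otherwise a purely bookkeeping argument --- is verifying that it is legitimate to pass from the per-term bounds of Lemma~\ref{lem:MHR} to a bound on the maximum of the two terms; this holds precisely because the $1/\partparam^2$ factor is the same for both bounds and so survives the $\max$.
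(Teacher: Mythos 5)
Your proposal is correct and matches the paper's (implicit) argument exactly: the paper leaves this corollary without an explicit proof, but the intended derivation is precisely the composition of Theorem~\ref{thm:replication} with Lemma~\ref{lem:MHR} at $\partparam=1/3$, followed by the observation that the worst best and average best runtimes on $m$ machines lower-bound the optimal makespan. Your remark that the uniform factor $1/\partparam^2=9$ passes through the $\max$ of the two benchmarks is the right (and only) point needing care.
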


\begin{theorem}\label{thm:llm}
  For $n \geq m \log m$ jobs, $m$ machines, and runtimes from an
  i.i.d.\@ distribution, the expected makespan of the sieve and
  bounded overload mechanism with overload $\overload = \overloadvalue$, partition
  parameter $\partparam= 1/\log \log m$, and reserve\\
  $\reserve=\frac{2n}{m\log m} \expect[\dmin {\frac\partparam{2}
    m}\T]$, is a constant approximation to the larger of the expected
  worst best and average best runtime, and hence also to
  the optimal makespan, on $\partparam m/2$ machines. Here $\T$
  denotes a draw from the distribution on job sizes.
\end{theorem}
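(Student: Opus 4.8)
\noindent\emph{Plan of proof.}
Write $\OPT_{\mathrm{ab}}$ and $\OPT_{\mathrm{wb}}$ for the expected average best and expected worst best runtimes on $\tfrac{\partparam}{2}m$ machines; the plan is to show the expected makespan is $O(\max(\OPT_{\mathrm{ab}},\OPT_{\mathrm{wb}}))$. Since every machine participates in exactly one stage, the makespan is the larger of the heaviest work among the $(1-\partparam)m$ sieve machines (the ``first stage'') and the heaviest work among the $\partparam m$ bounded overload machines (the ``second stage''), so I would bound the expectations of these two quantities separately and add them. One identity is used throughout: by the i.i.d.\ assumption $\expect[\sum_j \dmin{\frac{\partparam}{2}m}{\Tj}] = n\,\expect[\dmin{\frac{\partparam}{2}m}{\T}]$, hence $\OPT_{\mathrm{ab}} = \tfrac{2n}{\partparam m}\expect[\dmin{\frac{\partparam}{2}m}{\T}]$, and the stated choice of reserve gives the clean relations $\reserve\log m = \partparam\,\OPT_{\mathrm{ab}}\le\OPT_{\mathrm{ab}}$ and $\reserve = \partparam\,\OPT_{\mathrm{ab}}/\log m$.

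\emph{First stage.} The sieve assigns each job to a machine as a function of that job's runtime vector alone, hence independently across jobs; by machine symmetry, conditioned on being scheduled a job's machine is uniform among the $(1-\partparam)m$ sieve machines; and every scheduled job has runtime at most $\reserve$ on its machine. Thus, writing $S_i$ for the work on sieve machine $i$, each $S_i$ is a sum of independent $[0,\reserve]$-valued terms with $\expect[S_i]=\mu$, where $\mu = \expect[\text{total scheduled work}]/((1-\partparam)m) \le \expect[\sum_j\dmin{(1-\partparam)m}{\Tj}]/((1-\partparam)m) \le \expect[\sum_j\dmin{\frac{\partparam}{2}m}{\Tj}]/((1-\partparam)m) = \tfrac{\partparam}{2(1-\partparam)}\OPT_{\mathrm{ab}} \le \OPT_{\mathrm{ab}}$ (the second step is monotonicity of the minimum in the number of draws, using $(1-\partparam)m\ge\tfrac{\partparam}{2}m$, and the last uses $\partparam\le\tfrac23$). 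A Chernoff bound for bounded independent sums gives $\Pr[S_i \ge 3\mu + 3\lambda\reserve]\le e^{-\Omega(\lambda)}$ for $\lambda\ge 1$; a union bound over the $\le m$ machines combined with $\expect[\max_i S_i] = \int_0^\infty\Pr[\max_i S_i > t]\,dt$ then yields $\expect[\max_i S_i] = O(\mu + \reserve\log m) = O(\OPT_{\mathrm{ab}})$, using $\reserve\log m = \partparam\,\OPT_{\mathrm{ab}}$.

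\emph{Second stage.} Let $S$ be the set of jobs left unscheduled by the sieve, i.e.\ those with $\dmin{(1-\partparam)m}{\Tj}\ge\reserve$. Since $S$ is determined by the runtimes on the first $(1-\partparam)m$ machines, it is independent of these jobs' runtimes on the $\partparam m$ second-stage machines, which are still i.i.d.\ from $F$. I would condition on $S$ and apply Theorem~\ref{thm:main-noniid} to the subinstance with $\partparam m$ machines and job set $S$, obtaining expected second-stage makespan at most $\finalApprox\,\tfrac{|S|}{\partparam m}\,\expect[\max_{j\in S}\dmin{\frac{\partparam}{2}m}{\Tj}] \le \finalApprox\,\tfrac{|S|}{\partparam m}\,\OPT_{\mathrm{wb}}$, and thus in expectation over $S$ at most $\finalApprox\,\tfrac{\expect[|S|]}{\partparam m}\,\OPT_{\mathrm{wb}}$. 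It remains to bound $\expect[|S|] = n(1-F(\reserve))^{(1-\partparam)m} = n\,q^{2(1-\partparam)/\partparam}$, where $q := (1-F(\reserve))^{\partparam m/2} = \Pr[\dmin{\frac{\partparam}{2}m}{\T}\ge\reserve]$. By Markov's inequality and $n\ge m\log m$, $q \le \expect[\dmin{\frac{\partparam}{2}m}{\T}]/\reserve = \tfrac{m\log m}{2n}\le\tfrac12$, so $\expect[|S|] \le n\,q\,(1/2)^{2(1-\partparam)/\partparam - 1} \le \tfrac{m\log m}{2}(1/2)^{2(1-\partparam)/\partparam - 1}$; with $\partparam = 1/\log\log m$ the exponent is $\Theta(\log\log m)$, which makes this $o(\partparam m)$. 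Hence the expected heaviest second-stage work is $o(\OPT_{\mathrm{wb}})$.

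Adding the two bounds gives $\expect[\text{makespan}] = O(\OPT_{\mathrm{ab}}) + o(\OPT_{\mathrm{wb}}) = O(\max(\OPT_{\mathrm{ab}},\OPT_{\mathrm{wb}}))$, which is the desired constant approximation, and hence also an approximation to the optimal makespan on $\tfrac{\partparam}{2}m$ machines. The step I expect to be the main obstacle is the first-stage concentration: one needs to control the \emph{expectation} of the busiest sieve machine's work, not merely a high-probability bound, and uniformly in $n$, which may be arbitrarily large relative to $m$. The argument goes through because each job's contribution is capped at $\reserve$ and because the reserve is tuned so that the ``$\reserve\log m$'' slack term produced by the tail bound is itself $\Theta(\partparam\,\OPT_{\mathrm{ab}}) = O(\OPT)$. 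By contrast the second stage is comparatively soft: the aggressive partition $\partparam = 1/\log\log m$ drives the expected number of surviving large jobs well below the second-stage capacity, so Theorem~\ref{thm:main-noniid} applies essentially as a black box with a vanishing load factor.
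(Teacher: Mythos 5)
Your proposal is correct and follows essentially the same route as the paper's proof: bound the sieve stage's expected work per machine by $\tfrac{\partparam}{2(1-\partparam)}$ times the average-best-runtime bound, apply a Chernoff bound using the cap $\reserve$ tuned so that the $\reserve\log m$ slack is $O(\OPT_{\partparam/2})$, bound the expected number of surviving jobs by $O(\partparam m)$ via the same batching-plus-Markov argument, and invoke Theorem~\ref{thm:main-noniid} on the second stage. The only differences are cosmetic: you convert the first-stage tail bound to an expectation bound by integrating the tail directly (arguably cleaner than the recursive rescheduling argument the paper reuses from Lemma~\ref{lem:sieve}), and your second-stage contribution should read $O(\OPT_{\partparam/2})$ rather than $o(\OPT_{\partparam/2})$ since the effective per-machine job count is really $\lceil\overload\load\rceil\ge 1$, which does not affect the constant-factor conclusion.
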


\begin{corollary} 
  Under the assumptions of Theorem~\ref{thm:llm} where additionally
  the distribution of job sizes has monotone hazard rate the expected
  makespan of the sieve and bounded overload mechanism is a $O((\log
  \log m)^2)$ approximation to the expected optimal makespan.
\end{corollary}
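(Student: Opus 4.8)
The plan is to deduce the corollary immediately from Theorem~\ref{thm:llm} together with Lemma~\ref{lem:MHR}. Theorem~\ref{thm:llm} already bounds the expected makespan of the sieve and bounded overload mechanism, under the stated parameter setting ($\overload = \overloadvalue$, $\partparam = 1/\log\log m$, and the given reserve), against the larger of the expected worst best and average best runtimes evaluated on $\partparam m/2$ machines. Both of those quantities are by definition lower bounds on the first-best expected makespan on $\partparam m/2$ machines, and the only remaining gap between them and the first-best makespan on the full set of $m$ machines is the ``machine discount'' incurred by working with fewer machines. Lemma~\ref{lem:MHR} controls exactly this discount for MHR distributions. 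So the entire argument is: invoke Theorem~\ref{thm:llm}, apply Lemma~\ref{lem:MHR} to translate the $\partparam m/2$-machine benchmark back to an $m$-machine benchmark, and finally use first-best $\le$ second-best.

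In detail, I would first write $\expect[\text{makespan}] \le O(1)\cdot B_{\partparam m/2}$, where $B_{\partparam m/2}$ denotes the maximum of the expected worst best and expected average best runtimes on $\partparam m/2$ machines; this is precisely the content of Theorem~\ref{thm:llm}. Next, set $\partparam' = \partparam/2 = 1/(2\log\log m)$ and apply Lemma~\ref{lem:MHR} with parameter $\partparam'$: it states that each of the two surrogate benchmarks on $\partparam' m$ machines is at most $1/\partparam'^2 = 4(\log\log m)^2$ times the corresponding surrogate on $m$ machines. Since the same multiplicative factor applies to both surrogates, it applies to their maximum, so $B_{\partparam' m} \le 4(\log\log m)^2 \cdot B_m$. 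Finally, $B_m$ is a lower bound on the first-best expected makespan on $m$ machines, which is itself a lower bound on the second-best makespan, so $\expect[\text{makespan}] \le O(1)\cdot 4(\log\log m)^2 \cdot \OPT = O((\log\log m)^2)\cdot\OPT$, which is the claimed approximation to both the first-best (``optimal'') and the second-best makespan.

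This corollary poses essentially no obstacle of its own; the substantive work is in Theorem~\ref{thm:llm} and Lemma~\ref{lem:MHR}, both of which we may assume. The only points that require care are purely bookkeeping: (i) Lemma~\ref{lem:MHR} must be applied with the halved fraction $\partparam' = \partparam/2$ rather than $\partparam$ itself, because Theorem~\ref{thm:llm}'s benchmark lives on $\partparam m/2$ machines --- this affects only the hidden constant, turning $(\log\log m)^2$ into $4(\log\log m)^2$; and (ii) one should note that Lemma~\ref{lem:MHR} bounds the expected worst best and average best runtimes separately by the same factor, so the ``larger of the two'' benchmark used in Theorem~\ref{thm:llm} is bounded by that same factor times the larger of the two on $m$ machines, with no extra loss in combining them.
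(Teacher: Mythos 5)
Your proposal is correct and is exactly the intended argument: the paper leaves this corollary unproved because it follows immediately by combining Theorem~\ref{thm:llm} with Lemma~\ref{lem:MHR} applied at fraction $\partparam/2 = 1/(2\log\log m)$, yielding the $4(\log\log m)^2 = O((\log\log m)^2)$ factor. Your bookkeeping points (using the halved fraction, and the factor passing through the maximum of the two surrogate benchmarks) are the right details to check.
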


We prove Theorem~\ref{thm:main-noniid} in Section~\ref{sec:small-m}
and Theorems~\ref{thm:replication} and \ref{thm:llm} in
Section~\ref{sec:general}.

\subsection{Probabilistic Analysis}
\label{sec:prob}
Our goal is to show that the simple processes described by the bounded
overload and sieve mechanisms result in good makespan and our
upper bound on makespan is given by the first order statistics of each
job's runtime across the machines.  The sieve's performance analysis
is additionally governed by the law of large numbers.  We describe
here basic facts about order statistics and concentration bounds.
Additionally we give a number of new bounds, proofs of which are in
Section~\ref{sec:probabilistic}.

For random variable $X$ and integer $k$, we consider the following
basic constructions of $k$ independent draws of the random variable.
The $i$th order statistic, or the $i$th minimum of $k$ draws, is
denoted $X\ithofn{i}{k}$.  The first order statistic, i.e., the
minimum of the $k$ draws, is denoted $\dmin{k} X$.  The $k$th order
statistic, i.e., the maximum of $k$ draws, is denoted $\dmax{k} X$.
Finally, the sum of $k$ draws is denoted $\dsum{k} X$.  We include the
possibility that $i$ or $k$ can be random variables.  We also allow
the notation to cascade, e.g., for the special case where the jobs are
i.i.d. from $\dist$ the lower bounds on $\OPT$ are
$\dmax{n}{\dmin{m}{\T}}$ and $\dsum{n}{\dmin{m}{\T}}/m$ for the
expected worst best and average best runtime, respectively, and $\T$
drawn from $\dist$.



We will use the following forms of Chernoff-Hoeffding bounds in this paper.
Let $X = \sum_i X_i$, where $X_i \in [0,B]$ are independent random
variables. Then, for all $\epsilon \geq 1$, 
\begin{equation*}
\Pr[X > (1+\epsilon)\expect[X]] <
\exp\left(\tfrac{-\epsilon\expect[X]}{3B}\right) < \exp\left(\tfrac{-(1+\epsilon)\expect[X]}{6B}\right)
\end{equation*}




Our analysis often involves relating different order statistics of a
random variable (e.g. how does the size of a job on its best machine
compare to that on its second best machine). We relate these different
order statistics via the stochastic dominance relation. This is useful
in our analysis because stochastic dominance is preserved by the max
and sum operators. We say that a random variable $X$ is stochastically
dominated by another random variable $Y$ if for all $t$, $\pr[X\le
t]\ge \pr[Y\le t]$. Stochastic dominance is equivalent to being able
to couple the two random variables $X$ and $Y$ so that $X$ is always
smaller than $Y$.

Below, the first lemma relates the $i$th order statistic over some
number of draws to the first order statistic over half the draws.  The
second relates the minimum over several draws of a random variable
to a single draw of that variable.  The third 
relates the
maximum over multiple draws of a random variable to an appropriate sum over those
draws.  These lemmas are proved in Section~\ref{sec:probabilistic}.


\begin{lemma}
\label{lem:stoch-dom}
Let $X$ be any nonnegative random variable and $m$ and $i\le m$ be
arbitrary integers. Let $\alpha$ be defined such that
$\Pr[X\le\alpha]=1/m$ (or for discontinuous distributions, $\alpha = \sup\{z:\Pr[X \le z] < 1/m\}$). Then
$X\ithofn{i}{m}$ is stochastically dominated by
$\max(\alpha,\dmax{\couplingApprox^i}{\dmin{m/2} X})$.
\end{lemma}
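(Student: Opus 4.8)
The plan is to couple $X\ithofn{i}{m}$ with $\max(\alpha, \dmax{\couplingApprox^i}{\dmin{m/2} X})$ by exhibiting, on a common probability space, a construction in which the former is pointwise at most the latter. First I would set up notation: think of the $m$ independent draws as arriving in a fixed order, and partition the $m$ indices into $\couplingApprox^i = 4^i$ blocks, each of size roughly $m/4^i$. (I will not worry about divisibility; the $\max$ with $\alpha$ is there precisely to absorb the degenerate small-block corrections, and I would remark that if $4^i > m/2$ the claim is either trivial or follows from monotonicity of order statistics.) The key probabilistic fact to extract is: in a single block of $\approx m/4^i$ draws, the probability that \emph{all} of them exceed $\alpha$ is at most $(1-1/m)^{m/4^i}$, which for the relevant range of $i$ is bounded away from $1$ by a constant — say it is at most some $\rho < 1$. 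Hence the number of blocks whose minimum exceeds $\alpha$ is stochastically dominated by a Binomial$(4^i, \rho)$, and in particular with the right choice of block count and $\rho$, the probability that $i$ or more blocks have their minimum above $\alpha$ is controlled.

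The core argument is then a case analysis on the event $E$ that ``fewer than $i$ of the $4^i$ blocks have all their draws above $\alpha$.'' On $E$, at least $4^i - i + 1$ blocks contain a draw $\le \alpha$; since these are $\ge i$ distinct draws each at most $\alpha$ (the block minima), the $i$th smallest of all $m$ draws is at most $\alpha$, so $X\ithofn{i}{m} \le \alpha \le \max(\alpha, \cdot)$ and we are done on $E$. On the complement $\bar E$ — at least $i$ blocks have every draw exceeding $\alpha$, an event whose structure I can use — I instead bound $X\ithofn{i}{m}$ by the $i$th smallest of the $4^i$ block minima, which is certainly at most the maximum of the $4^i$ block minima; and each block minimum is the minimum of $\approx m/4^i \ge$ (for $i$ not too large) $\approx m/2^i$... here I need the minimum over a block to be stochastically dominated by $\dmin{m/2}{X}$, which fails directly since a block has fewer than $m/2$ draws. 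The fix is to observe that $\dmin{m/2}{X}$ stochastically dominates $\dmin{k}{X}$ for any $k \le m/2$, so a block minimum (over $\le m/2$ draws) \emph{dominates} $\dmin{m/2}{X}$ rather than being dominated by it — so this direction is backwards and I must be more careful about which way the coupling goes.

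Let me restate the clean version I would actually write: couple so that the $m$ draws are organized into $4^i$ groups; the maximum of the $4^i$ group-minima is what appears on the right (each group-min is a min over $\ge m/2$... no). The honest resolution: take $m$ draws, split into $2^i$ groups of size $m/2^i \ge m/2$ only when $i=1$. So instead I would use the \emph{telescoping / iterative halving} structure: $X\ithofn{i}{m}$ relates to $X\ithofn{i-1}{m/2}$-type quantities, and $\dmin{m/2}{X}$ appears at the base; iterating $i$ times and paying a factor of (roughly) $4$ in the group count per level — turning the range of a min over $m/2$ draws into a max of $4^i$ such mins — yields the bound. Concretely: condition on the value of $\alpha$; with probability $\ge 1/2$ a random group of $m/2$ draws has its minimum $\le \alpha$ (since $\Pr[\dmin{m/2}{X} \le \alpha] = 1-(1-1/m)^{m/2} \ge 1 - e^{-1/2} > 1/3$), so across $4^i$ independent such groups the event ``$\ge i$ of them have min $> \alpha$'' has probability that makes the stochastic-dominance inequality $\Pr[X\ithofn{i}{m} \le t] \ge \Pr[\max(\alpha, \dmax{4^i}{\dmin{m/2}{X}}) \le t]$ hold for every $t$: for $t < \alpha$ both sides are $0$ (the right side is $0$ because of the $\max$ with $\alpha$; the left is handled by noting $\Pr[X\ithofn{i}{m} \le t] \ge \Pr[\dmax{4^i}{\dmin{m/2}X}\le t]$ must still be checked), and for $t \ge \alpha$ the right side is $\Pr[\dmax{4^i}{\dmin{m/2}{X}} \le t] = \Pr[\dmin{m/2}{X}\le t]^{4^i}$ while the left side is $\Pr[X\ithofn{i}{m}\le t] = \sum_{j\ge i}\binom{m}{j}\Pr[X\le t]^j\Pr[X>t]^{m-j}$, and a direct comparison of these two expressions (using $\Pr[X\le t]\ge 1/m$ for $t\ge\alpha$, a Chernoff/Binomial tail bound, and the choice of constant $4$) gives the inequality. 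The main obstacle, and the step I would spend the most care on, is exactly this last comparison of a Binomial upper-tail CDF at threshold $i$ against a $4^i$-th power of a single-draw CDF — making the constant $\couplingApprox=4$ actually work requires the Chernoff bound $\Pr[\mathrm{Bin}(m,p)< i] $ vs $(1-(1-p)^{m/2})^{4^i}$ to be checked in the two regimes $p$ close to $1/m$ (the binding case) and $p$ large separately; everything else is bookkeeping with monotonicity of order statistics and the harmless $\alpha$-truncation.
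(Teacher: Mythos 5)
Your write-up cycles through three strategies and completes none of them. The first (partition the $m$ draws into $\couplingApprox^i$ blocks and compare block minima to $\dmin{m/2}X$) you correctly abandon, because a block of $m/\couplingApprox^i$ draws has a minimum that stochastically \emph{dominates} $\dmin{m/2}X$, which is the wrong direction. The second (``telescoping / iterative halving'') is never made concrete enough to assess. The third, a direct comparison of CDFs, does correctly reduce the lemma to the inequality
\begin{equation*}
\pr\bigl[\mathrm{Bin}(m,p)\ge i\bigr]\;\ge\;\bigl(1-(1-p)^{m/2}\bigr)^{\couplingApprox^i}
\qquad\text{for all }p\ge 1/m,
\end{equation*}
but you then explicitly defer this inequality as ``the main obstacle'' and ``the step I would spend the most care on.'' That inequality \emph{is} the lemma; everything preceding it is routine. (A smaller slip: for $t<\alpha$ you do not need to check anything about $X\ithofn{i}{m}$ --- the dominance inequality $\pr[X\ithofn{i}{m}\le t]\ge \pr[\max(\alpha,\cdot)\le t]$ holds trivially there because the right-hand side is $0$; your worry that the left side ``must still be checked'' is misplaced.) So the proposal as written has a genuine gap at its core.

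For contrast, the paper sidesteps the binomial-tail-versus-power comparison entirely. It works with the $\couplingApprox^i\, m/2$ draws underlying the right-hand side: partition them into $\couplingApprox^i/2$ groups of size $m$, randomly split each group into two halves of size $m/2$, and let $\G$ be the event that some group has its $i$ smallest draws all land in one half. On $\G$, the other half of that group has minimum at least $X\ithofn{i+1}{m}$, so $\dmax{\couplingApprox^i}{\dmin{m/2}X}\ge X\ithofn{i+1}{m}$ for a correctly distributed copy of $X\ithofn{\cdot}{m}$. The proof then closes by showing that the ratio $\pr[X\ithofn{i+1}{m}>t]/\pr[X\ithofn{i}{m}>t]$ is at least $1+i^{-i}/(1+(i-1)e)$ when $\dist(t)\ge 1/m$, while $\pr[\G]\ge 1-e^{-(\couplingApprox/2)^i}$, and verifying that the product of these exceeds $1$. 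If you want to salvage your third strategy, you must actually prove the displayed binomial inequality for all $i$, $m$, and $p\in[1/m,1]$ (not just at $p=1/m$); until then the argument is incomplete.
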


\begin{lemma}\label{thm:MHR}
 For a random variable $X$ whose distribution satisfies the monotone
 hazard rate condition, $X$ is stochastically dominated by $r
 \dmin{r} X$. 
\end{lemma}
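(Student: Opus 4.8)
The plan is to reduce the claimed stochastic dominance to a pointwise inequality between survival functions, and then to read that inequality off from the convexity of the log-survival function, which is exactly what the monotone hazard rate condition supplies. Write $\dist$ for the cdf of $X$ and $\bar\dist := 1-\dist$ for its survival function; since $X$ is a job size I take $X \ge 0$, and I may assume $r$ is a positive integer (if $r$ is random and independent of the draws realizing $\dmin{r}{X}$, condition on its value, so it suffices to treat each fixed $r \ge 1$). Because $\dmin{r}{X}$ is a minimum of $r$ independent copies of $X$, one has $\Pr[\dmin{r}{X} > s] = \bar\dist(s)^r$; hence for every $t \ge 0$ the defining inequality $\Pr[X \le t] \ge \Pr[r\dmin{r}{X} \le t]$ is equivalent to
\[
  \bar\dist(t) \ \le\ \bar\dist(t/r)^{\,r},
\]
while for $t < 0$ both probabilities vanish (as $X,\dmin{r}{X}\ge 0$).

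The second step uses the hazard rate assumption. Set $g(x) := -\ln \bar\dist(x)$; then $g'(x) = \dens(x)/\bar\dist(x)$ is the hazard rate, so the MHR condition says precisely that $g$ is convex, and $g(0) = -\ln\bar\dist(0) = 0$ since $X \ge 0$ has a density. A convex function that vanishes at the origin satisfies $g(\lambda x) \le \lambda g(x)$ for all $\lambda \in [0,1]$, because $g(\lambda x) = g(\lambda x + (1-\lambda)\cdot 0) \le \lambda g(x) + (1-\lambda)g(0) = \lambda g(x)$. Taking $\lambda = 1/r$ (legitimate as $r \ge 1$) and $x = t$ gives $r\,g(t/r) \le g(t)$, and exponentiating yields $\bar\dist(t/r)^{\,r} = e^{-r g(t/r)} \ge e^{-g(t)} = \bar\dist(t)$, which is the displayed inequality. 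Undoing the first reduction completes the proof.

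I do not expect a real obstacle here: once one passes to $-\ln\bar\dist$, the argument is essentially two lines. The points that need a word of care are (i) the boundary behaviour at $0$ --- an atom at $0$ would spoil $g(0)=0$, but that cannot happen for an MHR distribution with a density, and in general $g\equiv 0$ to the left of the support, whose left endpoint is $\ge 0$ --- and (ii) the degenerate case $\bar\dist(t)=0$, where the inequality is immediate since $\bar\dist$ is nonincreasing and hence $\bar\dist(t/r)\ge\bar\dist(t)$. It is also reassuring that the bound is tight: for the exponential distribution $\dmin{r}{X}$ is exponential with $r$ times the rate, so $r\dmin{r}{X}$ has the same law as $X$, consistent with the exponential being the extreme case of the MHR class.
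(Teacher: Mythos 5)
Your proof is correct and follows essentially the same route as the paper's: both reduce the claim to the inequality $-\ln\bar\dist(t) \ge -r\ln\bar\dist(t/r)$, i.e.\ $\int_0^t h(z)\,dz \ge r\int_0^{t/r}h(z)\,dz$ for the cumulative hazard, and derive it from the convexity of that function (which vanishes at $0$), convexity being exactly the MHR hypothesis. Your extra care with the boundary cases ($g(0)=0$, $\bar\dist(t)=0$) is a welcome but minor refinement.
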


\begin{lemma}
\label{lem:noniidAdjustment}
Let $K_1,\cdots,K_n$ be independent and identically
distributed integer random variables such that for some constant
$c>1$, we have $K_j \geq c$, and let
$\rvw_1,\cdots,\rvw_n$ be arbitrary independent nonnegative variables. Then,
\begin{align*}
\expect\left[\max\nolimits_j \dmax{K_j}{\rvw_j}\right]
&\leq
\tfrac{c}{c-1}\expect\left[K_1\right]\expect\left[\max\nolimits_j \rvw_j\right].
\end{align*}
\end{lemma}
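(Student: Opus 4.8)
The plan is to avoid reasoning about the order statistics $\dmax{K_j}{\rvw_j}$ one at a time and instead work with the tail-integral identity $\expect[Z]=\int_0^\infty\Pr[Z>t]\,dt$ for the nonnegative variable $Z=\max_j\dmax{K_j}{\rvw_j}$, exploiting that the maximum of independent variables has a product cdf. Write $M_j:=\dmax{K_j}{\rvw_j}$ and let $\dist_j$ be the cdf of $\rvw_j$. Since the $K_j$ are independent of each other and of the $\rvw_j$'s, and the draws realizing each $M_j$ are fresh, the $M_j$ are mutually independent; conditioning on $K_j$ gives $\Pr[M_j\le t]=\expect[\dist_j(t)^{K_j}]$, and therefore $\Pr[\max_j M_j\le t]=\prod_j\expect[\dist_j(t)^{K_j}]$.

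The first key step is to lower bound each factor $\expect[\dist_j(t)^{K_j}]$. For fixed $t$ the map $k\mapsto\dist_j(t)^k=e^{k\ln\dist_j(t)}$ is convex (as $\ln\dist_j(t)\le 0$), so Jensen's inequality gives $\expect[\dist_j(t)^{K_j}]\ge\dist_j(t)^{\expect[K_j]}=\dist_j(t)^{\expect[K_1]}$, using that the $K_j$ are identically distributed. Multiplying over $j$ collapses the product: with $P(t):=\prod_j\dist_j(t)=\Pr[\max_j\rvw_j\le t]$ and $\kappa:=\expect[K_1]$ we get $\Pr[\max_j M_j\le t]\ge P(t)^\kappa$, hence $\Pr[\max_j M_j>t]\le 1-P(t)^\kappa$. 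The second key step is the elementary inequality $1-x^\kappa\le\kappa(1-x)$ for $x\in[0,1]$ and $\kappa\ge 1$ (the difference $\kappa(1-x)-(1-x^\kappa)$ vanishes at $x=1$ and is nonincreasing in $x$ on $[0,1]$ since its derivative is $\kappa(x^{\kappa-1}-1)\le 0$). This is the only place the hypothesis is used: each $K_j\ge c>1$ forces $\kappa=\expect[K_1]\ge c>1$, so the inequality applies and $\Pr[\max_j M_j>t]\le\kappa\,(1-P(t))=\kappa\,\Pr[\max_j\rvw_j>t]$.

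Integrating over $t\in[0,\infty)$ and applying the tail-integral identity on both sides yields $\expect[\max_j M_j]\le\kappa\,\expect[\max_j\rvw_j]=\expect[K_1]\,\expect[\max_j\rvw_j]$, which already improves on the claimed bound (the factor $\tfrac{c}{c-1}>1$ is slack), so the lemma follows; the degenerate cases $\expect[K_1]=\infty$ or $\expect[\max_j\rvw_j]\in\{0,\infty\}$ make the statement trivial and can be dispatched first. I do not expect a genuine obstacle, as the argument is short, but two points need care: Jensen must be used in the right direction, namely to \emph{lower} bound $\expect[\dist_j(t)^{K_j}]$ via convexity of $k\mapsto x^k$ (not concavity); and the product of the per-coordinate lower bounds collapses to $P(t)^\kappa$ precisely because the exponent $\expect[K_1]$ is common to all $j$, which is exactly where the identical-distribution assumption on the $K_j$ enters. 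A cruder union-bound route, $\Pr[\max_j M_j>t]\le\sum_j K_j\Pr[\rvw_j>t]$, can also be pushed through but loses an extra $\tfrac{e}{e-1}$ factor and is messier, so I would not pursue it.
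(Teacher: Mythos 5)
Your proof is correct, and it takes a genuinely different route from the paper's. The paper establishes the lemma by a combinatorial coupling: it instantiates the $K_j$, lays out $k\,n!$ draws organized by permutations of the instantiated counts, produces $n!$ samples of $\max_j \dmax{K_j}{\rvw_j}$ and $k(n-1)!$ samples of $\max_j\rvw_j$ via a uniformly random $n$-dimensional matching over the per-coordinate pools, and then shows that ``good'' elements (those whose $n$-tuple contains no other extracted maximum) retain at least a $(1-1/c)$ fraction of the mass---which is precisely where the $\tfrac{c}{c-1}$ factor is lost. You instead argue analytically through the product-form cdf of the maximum: conditioning gives $\Pr[\max_j M_j\le t]=\prod_j\expect[F_j(t)^{K_j}]$, Jensen applied to the convex map $k\mapsto x^k$ on $x\in[0,1]$ lower-bounds each factor by $F_j(t)^{\expect[K_1]}$, the common exponent collapses the product to $P(t)^{\kappa}$, and the pointwise inequality $1-x^{\kappa}\le\kappa(1-x)$ together with the tail-integral identity finishes the argument. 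Every step checks out: the $M_j$ are indeed independent because the $K_j$ and the underlying draws are independent across $j$, the direction of Jensen is right, and the elementary inequality needs only $\kappa\ge 1$. Your route is shorter and strictly stronger---it yields $\expect[\max_j\dmax{K_j}{\rvw_j}]\le\expect[K_1]\expect[\max_j\rvw_j]$, removing the $\tfrac{c}{c-1}$ factor altogether, and it requires only $K_j\ge 1$ rather than $K_j\ge c>1$ (the integrality of the $K_j$ is likewise not needed). Since the lemma is consumed downstream only up to constants, either proof serves, but yours gives the cleaner statement.
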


%

We will analyze the expected makespan of a mechanism as the maximum
over a number of correlated real-valued random variables. The
correlation among these variables makes it difficult to understand and
bound the makespan. Our approach will be to replace these random
variables with an ensemble of independent random variables that have
the same marginal distributions. Fortunately, this operation does not
change the expected maximum by too much. Our next lemma relates the
expected maximum over an arbitrary set of random variables to the
expected maximum over a set of independent variables with the same
marginal distributions. It is a simple extension of the correlation
gap results of Aggarwal et al.~\cite{ADSY10}, Yan\cite{Y11}, and Chawla et
al.~\cite{CHMS10}.


\begin{lemma}
\label{lem:corGapVar}
Let $X_1,\cdots,X_n$ be arbitrary correlated real-valued random
variables. Let $Y_1,\cdots,Y_n$ be independent random variables
defined so that the distribution of $Y_i$ is identical to that of
$X_i$ for all $i$.   Then, $\expect[\max_j X_j]\le \frac{e}{e-1}\expect[\max_j Y_j]$.
\end{lemma}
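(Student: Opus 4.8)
The plan for Lemma~\ref{lem:corGapVar} is to prove it by the standard ``threshold plus correlation gap'' route. First I would use the layer-cake identity $\expect[Z] = \int_0^\infty \Pr[Z > t]\,dt$, valid for the nonnegative quantities we maximize over in all our applications (a makespan, or a sum of runtimes), to reduce the claim to the pointwise statement: for every fixed threshold $t \ge 0$,
$$\Pr[\max\nolimits_j X_j > t] \le \tfrac{e}{e-1}\,\Pr[\max\nolimits_j Y_j > t].$$
Integrating this inequality over $t \in [0,\infty)$ then immediately yields $\expect[\max_j X_j] \le \frac{e}{e-1}\expect[\max_j Y_j]$.

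To establish the pointwise inequality, I would fix $t$ and set $p_j = \Pr[X_j > t] = \Pr[Y_j > t]$ (equal since $Y_j$ has the same marginal law as $X_j$) and $s = \sum_j p_j$. For the left-hand side, note that $\{\max_j X_j > t\} = \{\exists j : X_j > t\}$, so the union bound together with the trivial bound by $1$ gives $\Pr[\max_j X_j > t] \le \min(1,s)$, and crucially this holds for an \emph{arbitrary} joint law of the $X_j$. For the right-hand side, independence of the $Y_j$ gives $\Pr[\max_j Y_j > t] = 1 - \prod_j(1-p_j) \ge 1 - e^{-s}$. Hence it suffices to verify the scalar inequality $\min(1,s) \le \tfrac{e}{e-1}\bigl(1 - e^{-s}\bigr)$ for all $s \ge 0$, which I would do by cases: for $s \le 1$ the map $s \mapsto s/(1-e^{-s})$ is increasing and equals $e/(e-1)$ at $s = 1$; for $s > 1$ it is enough that $1 - e^{-s} > 1 - 1/e$.

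There is essentially no hard step here — the only content is the elementary scalar inequality above — but the one place to be careful is the passage from the pointwise bound to the expectation, which uses the layer-cake representation and the restriction to thresholds $t \ge 0$, i.e.\ the nonnegativity of the maximum being bounded. Equivalently, I could bypass the scalar computation and invoke the correlation gap of \cite{ADSY10, Y11, CHMS10} as a black box: $S \mapsto \mathbf{1}[S \ne \emptyset]$ is a monotone submodular function vanishing on $\emptyset$, so applying their bound to the correlated set $\{j : X_j > t\}$ against its independent, same-marginal counterpart $\{j : Y_j > t\}$ gives the pointwise inequality directly, after which integration over $t$ completes the proof.
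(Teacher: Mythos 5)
Your proof is correct, but it takes a genuinely different and more elementary route than the paper. The paper invokes the correlation-gap theorem of Aggarwal et al.\ directly on the max function: it builds a universe with one element per (variable, value) pair after discretizing, applies the $\frac{e}{e-1}$ bound for monotone submodular $f$ to the joint distribution versus the product distribution over that universe, argues that the negative correlation induced by the $Y_i$'s only raises the expected maximum, and then removes the discretization and boundedness assumptions by limiting arguments. You instead reduce via the layer-cake identity to the pointwise statement $\Pr[\max_j X_j > t] \le \frac{e}{e-1}\Pr[\max_j Y_j > t]$, which you prove from scratch with a union bound, independence, and the scalar inequality $\min(1,s) \le \frac{e}{e-1}(1-e^{-s})$ (your case analysis of this inequality is correct: the ratio $s/(1-e^{-s})$ is increasing and equals $e/(e-1)$ at $s=1$). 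What your approach buys is a fully self-contained proof with no appeal to the submodular correlation gap and no discretization or truncation bookkeeping; what the paper's approach buys is generality (the same argument covers any monotone submodular objective, not just the max) at the cost of the machinery. One caveat you rightly flag, and which applies equally to the paper's proof: the lemma as stated for ``arbitrary real-valued'' variables is false without nonnegativity (take $n=1$ with negative mean), so the implicit restriction to nonnegative variables --- the only case used in the paper --- is essential to both arguments.
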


\section{The bounded overload mechanism}
\label{sec:small-m}

\newcommand{\step}{s}
\newcommand{\placed}[1][\step]{P_{#1}}
\newcommand{\displaced}[1][\step]{D_{#1}}
\newcommand{\nSteps}{N}
\newcommand{\nHeads}{H}

Recall that the bounded overload mechanism minimizes the total work subject to
the additional feasibility constraint that every machine is assigned at most
$\overload\load$ jobs. In this section we prove that the expected makespan of
the bounded overload mechanism, with the overload set to
$\overload=\overloadvalue$, is a
$\finalApprox\load$ factor approximation to the expected best worst runtime and thus to
the optimal makespan.


Intuitively the bounded overload mechanism tries to achieve two
objectives simultaneously: (1) keep the size of every job on the
machine its schedule to be close to its size on its best machine, but
also (2) evenly distribute the jobs across all the machines.  Recall,
that the minimum work mechanism achieves the first objective exactly,
but fails on the second objective.  Due to the independence between
jobs, the number of jobs on each machine may be quite unevenly
distributed.  In contrast, the bounded overload mechanism explicitly
disallows uneven assignments of jobs and therefore the main issue
to address in its analysis is whether it satisfies the first
objective, i.e., that the sizes of the jobs are close to what they are
in the minimum work mechanism.


To setup for the proof of
Theorem~\ref{thm:main-noniid} consider the following definitions that
describe the outcome of the bounded overload mechanism and the worst
best runtime on $m/2$ machines (which bounds the optimal makespan on
$m/2$ machines).  Let $\Tj$ denote a random variable drawn according
to job $j$'s distribution of runtimes $\distj$.  Let $\Bj$ denote the
job's best runtime out of $m/2$ machines, i.e., $\Bj =
\Tj\ithofn{1}{m/2}$, the first order statistic of $m/2$ draws.  The
expected worst best runtime on $m/2$ machines is
$\expect[\max_j \Bj]$.  The bounded overload mechanism considers
placing each job on one of $m$ machines.  These runtimes of job $j$
drawn i.i.d.\@ from $\distj$ impose a (uniformly random) ordering over
the machines starting from the machine that is ``best'' for $j$ to the
one that is ``worst''; this is $j$'s preference list. Let
$\Tji[\rank]$ denote the size of job $j$ on the $\rank$th machine in
this ordering (also called the job's $\rank$th favorite machine).  Let
$\Rj$ be a random variable to denote the rank of the machine that job
$j$ is placed on by the bounded overload mechanism.  As each machine
is constrained to receive at most $\overload\load$ jobs, the expected makespan of
bounded overload is $\overload\load\expect[\max_j\Tji[\Rj]]$. We will
bound this quantity in terms of $\expect[\max_j \Bj]$.

There are three main parts to our argument. First, we note that the
$\Rj$s are correlated across different $j$'s, and so are the
$\Tji[\Rj]$s. This makes it challenging to directly analyze
$\expect[\max_j\Tji[\Rj]]$. We use Lemma~\ref{lem:corGapVar} to
replace the $\Rj$s in this expression by independent random variables
with the same marginal distributions. We then show that the marginal
distributions can be bounded by simple geometric random variables
$\Rhatj$. To do so, we introduce another procedure for assigning jobs
to machines that we call the last entry procedure. The assignment of
each job under the last entry procedure is no better than its
assignment under bounded overload. On the other hand, the ranks of the
machines to which jobs are allocated in the last entry procedure are
geometric random variables with a bounded failure rate. Finally, we
relate the runtimes $\Tji[\Rhatj]$ to the optimal runtimes $\Bj$ using
Lemma~\ref{lem:stoch-dom}.

We begin by describing the last entry procedure.



\begin{description}
\item[last entry] In order to schedule job $j$, we first apply the
  bounded overload mechanism $\VCGc$ to all jobs other than $j$. We
  then place $j$ on the first machine in its preference list that has
  fewer than $\overload\load$ jobs. Let $\Lj$ denote the rank of the
  machine to which $j$ gets allocated.
\end{description}
We now make a few observations about the ranks $\Lj$ realized by the
last entry procedure.
\begin{lemma}
\label{lem:lastEntry}
The runtime of any job $j$ in bounded overload is no worse than its runtime in
the last entry procedure. That is, $\Rj\le\Lj$.
\end{lemma}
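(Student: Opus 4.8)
The plan is to prove $\Rj\le\Lj$ by a short exchange argument comparing the total work of two schedules: $\sigma$, the schedule that bounded overload ($\VCGc$) produces on all $n$ jobs, so that $j$ lands on its $\Rj$-th favorite machine; and $\sigma'$, the schedule that bounded overload produces on the $n-1$ jobs other than $j$, which is exactly the schedule used inside the last-entry procedure before $j$ is inserted. By definition each of $\sigma$ and $\sigma'$ minimizes total work over all assignments placing at most $\overload\load$ jobs on any machine, over its own set of jobs. Intuitively, bounded overload optimizes globally, so it should treat $j$ at least as well as a procedure that freezes everyone else in their $j$-absent optimum; the point is that this is not a priori obvious (freezing the others could conceivably leave a better slot open for $j$), and the comparison of total work below is what rules that out.

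Concretely, I would record the two work inequalities linking $\sigma$ and $\sigma'$. Deleting job $j$ from $\sigma$ only decreases machine loads, so $\sigma$ restricted to the jobs other than $j$ is feasible for the $(n-1)$-job bounded overload problem; optimality of $\sigma'$ then gives $\mathrm{work}(\sigma')\le \mathrm{work}(\sigma)-\Tji[\Rj]$. Conversely, by the definition of the last-entry procedure the $\Lj$-th machine on $j$'s preference list carries fewer than $\overload\load$ jobs under $\sigma'$, so inserting $j$ onto that machine yields a feasible solution to the full $n$-job problem whose total work is $\mathrm{work}(\sigma')+\Tji[\Lj]$; optimality of $\sigma$ then gives $\mathrm{work}(\sigma)\le \mathrm{work}(\sigma')+\Tji[\Lj]$. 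Adding the two inequalities and cancelling $\mathrm{work}(\sigma)$ and $\mathrm{work}(\sigma')$ leaves $\Tji[\Rj]\le\Tji[\Lj]$. Since job $j$'s runtimes are increasing along its preference list --- the preference list is precisely the sorted order of those runtimes --- this is equivalent to $\Rj\le\Lj$, which is the claim. (Note the argument does not even need that $j$'s top $\Lj-1$ machines are full under $\sigma'$, only that the $\Lj$-th one is not.)

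The only bookkeeping to be careful about is feasibility: both the deletion and the insertion must keep every machine at or below the cap of $\overload\load$ jobs, which is immediate here provided this cap is the same quantity in the $n$-job and $(n-1)$-job instances, as it is since $\overload$ and $\load=n/m$ are fixed parameters of the setting. The one genuinely delicate point is ties in the preference order, i.e., two machines giving job $j$ exactly equal runtime: fixing any consistent tie-breaking rule turns ``increasing along the preference list'' into ``nondecreasing,'' and the same computation still gives $\Tji[\Rj]\le\Tji[\Lj]$ and hence $\Rj\le\Lj$ under that rule (and for distributions with densities ties occur with probability zero, so the issue is moot). I do not expect any real obstacle beyond this; the substance of the proof is the two-line work comparison.
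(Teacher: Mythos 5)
Your proof is correct and is essentially the paper's own argument: both compare the total work of the full bounded-overload schedule with that of the last-entry schedule (bounded overload on the other $n-1$ jobs plus $j$ inserted on its $\Lj$-th machine), derive the two optimality inequalities, and cancel to get $\Tji[\Rj]\le\Tji[\Lj]$ and hence $\Rj\le\Lj$. Your added remarks on feasibility of the deletion/insertion and on tie-breaking in the preference order are sensible refinements of details the paper leaves implicit.
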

\begin{proof}
  Fix any instantiation of jobs' runtimes over machines. Consider the
  assignment of job $j$ in the last entry procedure, and let LE($j$)
  denote the schedule where all of the jobs but $j$ are scheduled
  according to bounded overload and $j$ is scheduled according to the
  last entry procedure. Since the bounded overload mechanism minimizes
  total work, the total runtime of all of the jobs in $\VCGc$ is no
  more than the total runtime of all of the jobs in LE($j$). On the
  other hand, the total runtime of all jobs except $j$ in LE($j$) is
  no more than the total runtime of all jobs except $j$ in
  $\VCGc$. This immediately implies that $j$'s runtime in bounded
  overload is no more than its runtime in last entry. Since this holds
  for any fixed instantiation of runtimes, we have $\Rj\le\Lj$.
\end{proof}


Next, we show that the rank $\Lj$ of a job $j$ in last entry is
stochastically dominated by a geometric random variable $\Rhatj$ that
is capped at $\lceil\frac{m}{\overload}\rceil$. Note that $\Lj$ is at
most $\lceil\frac{m}{\overload}\rceil$ since
$\lceil\frac{m}{\overload}\rceil$ machines can accommodate $\lceil
\frac{m}{c}\rceil c\load \geq n$ jobs and therefore last entry will
never have to send a job to anything worse than its $\lceil
\frac{m}{c}\rceil$th favorite machine.  The random variable $\Rhatj$
also lives in $\{1,\dots,\lceil \frac{m}{c}\rceil\}$, and is drawn
independently for all $j$ as follows: for $i\in\{1,\ldots,\lceil
\frac{m}{c}\rceil-1\}$, we have $\pr[\Rhatj=i] =
\frac{1-1/c}{c^{i-1}}$; and the remaining probability mass is on
$\lceil \frac{m}{c} \rceil$.
\begin{lemma}
\label{lem:lastEntry-ub}
The rank $\Lj$ of a job $j$ in last entry is stochastically dominated
by $\Rhatj$, and so the runtime of job $j$ in last entry is
stochastically dominated by $\Tji[\Rhatj]$.
\end{lemma}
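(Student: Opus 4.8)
My plan is to condition on the runtimes of all jobs other than $j$, run $\VCGc$ on those jobs, and then reveal job $j$'s preference list one machine at a time. Once the other $n-1$ jobs have been scheduled by bounded overload, the set of machines that are already "full" (i.e., carry exactly $\overload\load$ jobs) is some fixed set $S$ of size at most $\lfloor (n-1)/(\overload\load)\rfloor < m/\overload$, so at most $\lceil m/\overload\rceil - 1$ machines are blocked and at least one is available. Job $j$'s preference list is a uniformly random permutation of the $m$ machines (independent of everything that determined $S$, since it comes from an independent i.i.d.\ draw of $j$'s runtimes). The rank $\Lj$ is then the position in this random permutation of the first machine not in $S$. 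I would bound the conditional probability $\Pr[\Lj \ge i+1 \mid \Lj \ge i]$: given that the first $i-1$ machines in the list all lie in $S$, the $i$th machine is a uniformly random machine among the remaining $m-i+1$, of which at most $\lceil m/\overload\rceil - 1 - (i-1)$ are still in the "full" set, so the chance it is again full is at most $(\lceil m/\overload\rceil - 1 - (i-1))/(m-i+1) \le (m/\overload - 1 - (i-1))/(m - (i-1))$, which one checks is at most $1/\overload$ for every relevant $i$ (the worst case is $i=1$, giving $(m/\overload-1)/m < 1/\overload$; the ratio only decreases as $i$ grows since we remove one full machine and the denominator shrinks by the same amount). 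Hence the conditional failure probability at each step is at most $1/\overload$, uniformly over the conditioning on the other jobs, so $\Lj$ is stochastically dominated by the capped geometric variable $\Rhatj$ with $\Pr[\Rhatj = i] = (1-1/\overload)/\overload^{\,i-1}$ for $i < \lceil m/\overload\rceil$ and the remaining mass at $\lceil m/\overload\rceil$.

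To make the stochastic-dominance conclusion precise, I would argue via tail bounds: for each $i \le \lceil m/\overload\rceil$, $\Pr[\Lj \ge i] = \prod_{t=1}^{i-1}\Pr[\Lj \ge t+1 \mid \Lj \ge t] \le (1/\overload)^{i-1} = \Pr[\Rhatj \ge i]$, where the last equality is a direct computation from the definition of $\Rhatj$. Since these tail inequalities hold conditionally on the other jobs' runtimes for every fixed instantiation, they hold unconditionally as well, which is exactly the statement $\Lj$ is stochastically dominated by $\Rhatj$. The second assertion, that the runtime $\Tji[\Lj]$ is stochastically dominated by $\Tji[\Rhatj]$, then follows because the sequence $\Tji[1] \le \Tji[2] \le \cdots$ is monotone nondecreasing in the rank (by definition of the preference-list ordering), so a coupling in which $\Lj \le \Rhatj$ pointwise yields $\Tji[\Lj] \le \Tji[\Rhatj]$ pointwise; one must note that $\Rhatj$ is built from the same preference-list randomness only through the cap, or more cleanly, construct $\Rhatj$ independently and couple it to $\Lj$ via the standard monotone coupling of the two real lines, then compose with the monotone map $r \mapsto \Tji[r]$.

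The main obstacle I anticipate is handling the conditioning correctly: I need the preference list of job $j$ to be independent of the set $S$ of full machines, which is true because $S$ is determined by the other jobs' runtimes and $j$'s own runtimes are drawn independently, but it requires care to state that running $\VCGc$ on the other $n-1$ jobs does not "look at" $j$ at all (it does not — the last entry procedure explicitly schedules the other jobs first). A secondary subtlety is the edge behavior at the cap $\lceil m/\overload\rceil$: I must confirm that $\overload\load$ is an integer (or round appropriately) and that $\lceil m/\overload\rceil$ machines genuinely suffice to hold all $n = \load m$ jobs, which is the counting inequality $\lceil m/\overload\rceil \cdot \overload\load \ge n$ already noted in the text, so that $\Lj$ is well-defined and bounded by the cap with probability one. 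Beyond these bookkeeping points the argument is a routine union-of-geometric-steps computation.
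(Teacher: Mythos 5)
Your proposal is correct and follows essentially the same route as the paper's proof: condition on the bounded-overload schedule of the other $n-1$ jobs (so at most an $1/\overload$ fraction of machines are full), observe that job $j$'s preference list is an independent uniformly random permutation, and bound the probability that the top $i$ machines are all full by $\overload^{-i}$, yielding the capped geometric domination. Your extra care with the conditional step-by-step failure probabilities and the monotone coupling for the runtime claim just fills in details the paper leaves implicit.
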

\begin{proof}
  We use the principle of deferred decisions. In order to schedule
  $j$, the last entry procedure first runs bounded overload on all of
  the jobs other than $j$. This produces a schedule in which at most a
  $\frac{1}{\overload}$ fraction of the machines have all of their
  slots occupied. Conditioned on this schedule, job $j$'s preference
  list over machines is a uniformly random permutation. So the
  probability (over the draw of $j$'s runtimes) that job $j$'s
  favorite machine is fully occupied is at most
  $1/\overload$. Likewise, the probability that the job's two most
  favorite machines are both occupied is at most $1/\overload^2$, and
  so on. Therefore, the rank of the machine on which $j$ is eventually
  scheduled is dominated by a geometric random variable with failure
  rate $1/\overload$.
\end{proof}
Lemmas~\ref{lem:lastEntry} and~\ref{lem:lastEntry-ub} yield the following
corollary. 
\begin{corollary}
\label{cor:vcg-ub}
For all $j$, the runtime $\Tji[\Rj]$ of job $j$ in bounded overload is stochastically dominated by $\Tji[\Rhatj]$.
\end{corollary}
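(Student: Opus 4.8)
The plan is to chain the two preceding lemmas together using transitivity of stochastic dominance, with the one extra ingredient being that a job's runtime is a \emph{monotone} function of the rank of the machine it is placed on. Concretely, fix a job $j$ and work on the common probability space that carries the runtimes of all jobs on all machines. By construction, $j$'s preference list orders the machines in increasing order of runtime, i.e.\ $\Tji[1]\le\Tji[2]\le\cdots$, so the map $\rank\mapsto\Tji[\rank]$ is nondecreasing. Lemma~\ref{lem:lastEntry} gives $\Rj\le\Lj$ on every instantiation of the runtimes; pushing this pointwise inequality through the nondecreasing map yields $\Tji[\Rj]\le\Tji[\Lj]$ pointwise as well. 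Since this is a valid coupling of the two random variables on the same probability space, it witnesses that $\Tji[\Rj]$ (the runtime of $j$ in bounded overload) is stochastically dominated by $\Tji[\Lj]$ (its runtime in the last entry procedure).

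Next I would invoke Lemma~\ref{lem:lastEntry-ub}, whose second conclusion is precisely that $\Tji[\Lj]$ is stochastically dominated by $\Tji[\Rhatj]$ (this already accounts for $\Rhatj$ being drawn independently of the runtimes, so there is nothing to re-derive there). Composing the two dominance relations and using that stochastic dominance is transitive gives that $\Tji[\Rj]$ is stochastically dominated by $\Tji[\Rhatj]$, which is the claim.

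There is no real obstacle here: all the substantive work was done in establishing Lemmas~\ref{lem:lastEntry} and~\ref{lem:lastEntry-ub}. The only point requiring a word of care is the first step's use of the monotonicity of $\rank\mapsto\Tji[\rank]$, which is exactly what lets a rank comparison be converted into a runtime comparison; this monotonicity is immediate from the definition of the preference list, so the corollary follows directly.
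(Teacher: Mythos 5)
Your proof is correct and matches the paper's intent exactly: the paper gives no separate argument for this corollary, simply asserting that Lemmas~\ref{lem:lastEntry} and~\ref{lem:lastEntry-ub} yield it, and your write-up supplies precisely the implicit chaining (pointwise rank comparison pushed through the monotone map $\rank\mapsto\Tji[\rank]$, then transitivity with the second lemma's dominance). Nothing is missing.
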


The benefit of relating $\Tji[\Rj]$s with $\Tji[\Rhatj]$s is that
while the former are correlated random variables, the latter are
independent, because the $\Rhatj$'s are picked
independently. Corollary~\ref{cor:vcg-ub} implies that we can replace
the former with the latter, gaining independence, while losing only a
constant factor in expected makespan.


\begin{corollary}
\label{lem:corGap}
$\expect[\max_j \Tji[\Rj]]$ is no more than $e/(e-1)$ times $\expect[\max_j \Tji[\Rhatj]]$.
\end{corollary}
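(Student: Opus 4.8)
The plan is to combine the correlation-gap bound of Lemma~\ref{lem:corGapVar} with the stochastic dominance established in Corollary~\ref{cor:vcg-ub}. First I would apply Lemma~\ref{lem:corGapVar} with the correlated variables $X_j = \Tji[\Rj]$; these are correlated precisely because the ranks $\Rj$ are determined jointly by all the jobs' runtime vectors. Letting $Y_1,\dots,Y_n$ be independent random variables with $Y_j$ distributed identically to $\Tji[\Rj]$, Lemma~\ref{lem:corGapVar} gives directly $\expect[\max_j \Tji[\Rj]] \le \tfrac{e}{e-1}\expect[\max_j Y_j]$.

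Second, I would replace the $Y_j$'s by the $\Tji[\Rhatj]$'s. By Corollary~\ref{cor:vcg-ub}, for each $j$ the variable $\Tji[\Rj]$—and hence $Y_j$, which has the same distribution—is stochastically dominated by $\Tji[\Rhatj]$. Moreover the $\Tji[\Rhatj]$'s are mutually independent, since the $\Rhatj$'s are drawn independently by construction and the runtime vectors of distinct jobs are independent draws from the product prior. Since stochastic dominance is preserved by the max operator (equivalently, couple each $Y_j$ pointwise below $\Tji[\Rhatj]$ and observe the maxima inherit the coupling), we get $\expect[\max_j Y_j] \le \expect[\max_j \Tji[\Rhatj]]$. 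Chaining the two inequalities yields the corollary.

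There is essentially no obstacle here: the only point requiring a moment's care is that Lemma~\ref{lem:corGapVar} produces variables $Y_j$ only equal in distribution to $\Tji[\Rj]$, not literally those variables, so the dominance step must be phrased at the level of distributions; but stochastic dominance and independence are distributional properties, so this is immediate. The substance of the argument is entirely carried by the two previously established lemmas.
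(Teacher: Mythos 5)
Your proposal is correct and matches the paper's intended argument exactly: the paper states this as a corollary of Lemma~\ref{lem:corGapVar} (to decorrelate the $\Tji[\Rj]$'s) combined with Corollary~\ref{cor:vcg-ub} (to dominate the resulting independent marginals by the independent $\Tji[\Rhatj]$'s). Your care in noting that the dominance step is distributional, handled via an independent coupling of each pair, is the right way to make the max-preservation step rigorous.
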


The final part of our analysis relates the $\Tji[\Rhatj]$s to the
$\Bj$s. A natural inequality to aim for is to bound
$\ex[\Tji[\Rhatj]]$ from above by a constant times $\ex[\Bj]$ for each
$j$.  Unfortunately, this is not enough for our purposes: note that
our goal is to upper bound $\ex[\max_j \Tji[\Rhatj]]$ in terms of
$\ex[\max_j \Bj]$. Thus we proceed to show that $\Tji[\Rhatj]$ is
stochastically dominated by a maximum among some number of copies of
$\Bj$. We apply Lemma~\ref{lem:stoch-dom} (stated in
Section~\ref{sec:prelim} and proved in
Section~\ref{sec:probabilistic}) to the random variable $\Tji$ for
this purpose. Define $\tj=\sup\{t:\distj(t) < 1/m\}$.
Then the lemma shows that $\Tji$ is stochastically dominated by \linebreak
$\max(\tj,\dmax{\couplingApprox^i}{\Bj})$. 

Let $D_j$ be defined as
$\couplingApprox^{\Rhatj}$. Note that $\expect[D_j]$ can be bounded by
a constant whenever
$\overload > 4$ (this upper bound is obtained by treating $\Rhatj$ as a
geometric random variable without being capped at $\lceil\frac{m}{\overload}\rceil$).  Then
Lemma~\ref{lem:stoch-dom} implies the following corollary.
\begin{lemma}
\label{lem:stoch-dom-general}
$\Tji[\Rhatj]$ is stochastically dominated by $\max(\tj,\dmax{D_j} \Bj)$.
\end{lemma}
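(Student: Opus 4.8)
The plan is to combine Lemma~\ref{lem:stoch-dom} applied to the per-job random variable $\Tji$ with the definition $D_j = \couplingApprox^{\Rhatj}$, being careful to handle the conditioning on $\Rhatj$ correctly. Concretely, recall that $\Bj = \Tji[1]$ is the first order statistic of $m/2$ i.i.d.\@ draws from $\distj$, and $\tj = \sup\{t : \distj(t) < 1/m\}$ so that $\Pr[\Tj \le \tj] = 1/m$ (in the continuous case). Lemma~\ref{lem:stoch-dom}, instantiated with $X = \Tj$, the integer $m$, and an arbitrary index $i \le m$, tells us that the $i$th order statistic $\Tji[i]$ out of $m$ draws is stochastically dominated by $\max(\tj, \dmax{\couplingApprox^i}{\dmin{m/2}{\Tj}}) = \max(\tj, \dmax{\couplingApprox^i}{\Bj})$. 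So for each fixed value $i$ of the rank, conditioned on $\Rhatj = i$, the runtime $\Tji[\Rhatj]$ has the distribution of $\Tji[i]$ and is dominated by $\max(\tj, \dmax{\couplingApprox^i}{\Bj})$.

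First I would set up the coupling explicitly. By the coupling characterization of stochastic dominance stated in Section~\ref{sec:prelim}, for each $i$ there is a coupling under which $\Tji[i] \le \max(\tj, \dmax{\couplingApprox^i}{\Bj})$ pointwise, where the right-hand side is built from $\couplingApprox^i$ i.i.d.\@ copies of $\Bj$. The key structural point is that $\Rhatj$ is drawn \emph{independently} of the runtimes of job $j$ — indeed, in the last entry procedure $\Rhatj$'s distribution comes from the (independent) random permutation induced by $j$'s runtimes, but as a dominating random variable we may regard it as an exogenous independent draw. So I would draw $\Rhatj$ first, then, conditioned on $\Rhatj = i$, invoke the coupling above. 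Stitching these conditional couplings together gives a single coupling in which $\Tji[\Rhatj] \le \max(\tj, \dmax{\couplingApprox^{\Rhatj}}{\Bj}) = \max(\tj, \dmax{D_j}{\Bj})$, where $\dmax{D_j}{\Bj}$ denotes the maximum of $D_j = \couplingApprox^{\Rhatj}$ i.i.d.\@ copies of $\Bj$ (the cascading notation allowing $D_j$ itself to be a random variable, as noted in Section~\ref{sec:prob}). That pointwise inequality is exactly the claimed stochastic domination.

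The main obstacle, such as it is, is a bookkeeping one: making sure that the number of copies of $\Bj$ in the dominating expression, namely $\couplingApprox^{\Rhatj}$, is legitimately treated as a random sample size, and that the independence of $\Rhatj$ from the runtimes is used correctly so that the conditional couplings can be glued into one unconditional coupling. One subtlety worth a sentence in the writeup: $\Rhatj$ is capped at $\lceil m/\overload\rceil$, so $\couplingApprox^{\Rhatj}$ never exceeds $\couplingApprox^{\lceil m/\overload\rceil}$, which is why $\max(\tj, \dmax{D_j}{\Bj})$ is well-defined as a max over finitely many copies — and the $\tj$ term is precisely what Lemma~\ref{lem:stoch-dom} contributes to absorb the regime where $\couplingApprox^i$ would overshoot $m$. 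No heavy computation is needed; the lemma is essentially an assembly of Lemma~\ref{lem:stoch-dom}, Corollary~\ref{cor:vcg-ub}, and the independence of the $\Rhatj$'s.
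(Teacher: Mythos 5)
Your proposal is correct and follows exactly the route the paper intends: the paper gives no separate proof of Lemma~\ref{lem:stoch-dom-general}, simply asserting that Lemma~\ref{lem:stoch-dom} (applied to $\Tji$ with $\alpha = \tj$) implies it, and your argument fills in the routine step of conditioning on the independently drawn $\Rhatj = i$, invoking the per-$i$ coupling, and gluing the conditional couplings into one. The only cosmetic point is that the independence of $\Rhatj$ from job $j$'s runtimes needs no justification via the last entry procedure --- it holds by the paper's definition of $\Rhatj$ as an exogenous capped geometric variable.
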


\noindent
We are now ready to prove the main theorem of this section. 

 \begin{oneshot}{Theorem~\ref{thm:main-noniid}}
 For $n$ jobs, $m$ machines, load factor $\load = n/m$, and runtimes
 distributed according to a machine-symmetric product distribution, the
 expected makespan of the bounded overload mechanism with overload
 $\overload = \overloadvalue$ is a $\finalApprox\load$ approximation to the expected worst
 best runtime, and hence also to the optimal makespan, on $m/2$
 machines. 
 \end{oneshot}

\begin{proof}
The proof follows from the following series of inequalities that we explain below.
First we have $\msp(\VCGc) \leq \overload\load \expect\left[\max_j
\Tji[\Rj]\right]$ by the fact that $\VCGc$ schedules at most $\overload\load$
jobs per machine
\begin{align*}
\frac{e-1}{e}\expect\left[\max_j \Tji[\Rj]\right]
&\le \expect\left[\max_j \Tji[\Rhatj]\right]\\
& \le \expect\left[\max_j (\max(\tj,\dmax{D_j} \Bj))\right]\\
&\le \expect\left[\max_j \left(\tj+\dmax{D_j} \Bj\right)\right]\\
&\leq \max_j \tj + \expect\left[\max_j \dmax{D_j} \Bj\right]\\
&\leq 2\OPTh  + \tfrac{\couplingApprox}{\couplingApprox-2}\expect[D_j]\expect[\max_j\Bj]\\
&\leq \left(2+8\frac{\overload-1}{\overload-4}\right)\OPTh.
\end{align*}
The first of the inequalities follows from Lemma~\ref{lem:corGap},
the second from Lemma~\ref{lem:stoch-dom-general}, the third from
noting that the maximum of non-negative random variables is upper
bounded by their sum, and the last by the definition of $\OPTh$, along with the 
fact that $\expect[D_j] \leq 4\frac{\overload-1}{\overload-4}$. For
the fifth inequality we use Lemma~\ref{lem:noniidAdjustment} to bound the second term. 
For the first term in
that inequality consider the job $j$ that has the largest $\tj$. For
this job, the probability that its size on all of the $m/2$ machines
in \OPTh\ is at least $\tj$ is $(1-\distj(\tj))^{m/2} \geq
(1-1/m)^{m/2}\ge 1/2$ by the definition of $\tj$. So
$\OPTh\ge \max_j \tj/2$. 

The final approximation factor therefore is
$\overload\load\frac{e}{e-1}\left(2+8\frac{\overload-1}{\overload-4}\right)$ for all $\overload > 4$.
At $\overload = \overloadvalue$, this evaluates to a factor $\finalApprox\load$ approximation.
\end{proof}

\section{The sieve and bounded overload mechanism}
\label{sec:general}

We will now analyze the performance of the sieve and bounded overload
mechanisms under the assumption that the jobs are a priori
identical. Let us consider the sieve mechanism first. Recall that this
is essentially the minimum work mechanism where every job is assigned
to its best machine, except that jobs with a size larger than $\beta$
on every machine are left unscheduled. The bound of $\beta$ on the
size of scheduled jobs allows us to employ concentration results to
bound the expected makespan of the mechanism. Changing the value of
$\beta$ allows us to tradeoff the makespan of the mechanism with the
number of unscheduled jobs.

\begin{lemma}
  \label{lem:sieve} For $k<\log m$, the expected makespan of the sieve
  mechanism with $\beta=\frac{n\expect[\dmin m\T]}{km}$ is no more
  than $O(\log m/k)$ times the expected average best runtime, and
  hence also the expected optimal makespan. The expected number of jobs
  left unscheduled by the mechanism is $km$.
\end{lemma}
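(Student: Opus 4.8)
The plan is to separately handle (i) the expected makespan of the sieve mechanism and (ii) the expected number of unscheduled jobs. For (ii), observe that the jobs are i.i.d., so for a single job the probability it is left unscheduled is $\Pr[\dmin m \T > \beta]$. By Markov's inequality applied to $\dmin m \T$, this probability is at most $\expect[\dmin m \T]/\beta = k/\log m$ by the choice of $\beta$. Since there are $n$ jobs and $n \le m\log m$ is not assumed here — but in any case the lemma statement asserts the count is $km$, so I would take $n = m\log m$ as the relevant regime (or note that the expected count is $n \cdot k/\log m \le km$ when $n \le m\log m$, and is exactly $km$ when $n = m\log m$). Linearity of expectation then gives the bound on the expected number of unscheduled jobs.

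**Bounding the makespan.** For (i), the key observation is that after the sieve, every scheduled job has size at most $\beta$ on its assigned (best) machine, and jobs land on machines i.i.d.\ uniformly at random (by machine symmetry of the prior, the best machine of each job is uniform and independent across jobs). So the work on any fixed machine is a sum of independent $[0,\beta]$-bounded random variables; I would apply the Chernoff-Hoeffding bound stated in the preliminaries with $B = \beta$. The expected work on a machine is $\expect[\sum_j \dmin m \T \cdot \mathbf{1}[\text{job } j \text{ on this machine, scheduled}]] \le \frac{n}{m}\expect[\dmin m \T]$, which is exactly $k\beta$ by the definition of $\beta$; crucially this also equals $O(\log m / k)$ times the expected average best runtime $\frac nm \expect[\dmin m \T]$ — wait, it equals it up to the constant, so I want to show the makespan, not the expected per-machine work, is $O(\log m/k)$ times this. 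The point is that the makespan is a max over $m$ machines of these sums, and each sum concentrates: with $\epsilon = \Theta(\log m / (k \cdot (\text{mean}/\beta)))$-type scaling, the per-machine work exceeds $O(\log m / k) \cdot \frac nm \expect[\dmin m\T]$ with probability $o(1/m)$, so a union bound over the $m$ machines controls the max. I would then convert the high-probability bound into a bound in expectation, handling the low-probability tail event crudely (e.g., bounding total work by $n\beta$ there, which contributes lower-order).

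**Main obstacle.** The delicate point is the interplay between the mean per-machine work ($\approx k\beta$), the bound $\beta$ on each summand, and the target $O(\log m/k)$ approximation factor: the Chernoff bound gives deviation control at scale $\sqrt{\text{mean} \cdot \beta} = \sqrt{k}\,\beta$ for the $\pm\sqrt{\cdot}$ regime, but here $k < \log m$ so $\beta$ itself can be the dominant term, and we need the multiplicative-deviation form of Chernoff (the $\epsilon \ge 1$ form given in the preliminaries) with $\epsilon \approx \log m/(k \cdot \text{something})$ to push the failure probability below $1/m$. Concretely, I would set the threshold at $c'\frac{\log m}{k}\cdot\frac{n}{m}\expect[\dmin m \T] = c' \log m \cdot \beta$ for a suitable constant $c'$; since the mean is $\le k\beta$ and each term is $\le \beta$, the multiplicative Chernoff bound gives failure probability $\exp(-\Omega(c'\log m \cdot \beta / \beta)) = \exp(-\Omega(c'\log m)) \le 1/m^2$ for $c'$ large enough. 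Union-bounding over $m$ machines and adding the negligible tail contribution yields $\expect[\msp(\text{sieve})] = O(\log m/k)\cdot\frac nm\expect[\dmin m\T]$, which is the claimed bound since $\frac nm \expect[\dmin m\T] = \expect[\dsum n {\dmin m \T}]/m$ is a lower bound on OPT (the expected average best runtime). The bookkeeping of constants and the careful choice of which Chernoff form applies given $k < \log m$ is where the real work lies, but it is routine.
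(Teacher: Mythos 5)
Your strategy is essentially the paper's: treat the work on a fixed machine as a sum of independent $[0,\beta]$-bounded variables with mean at most $\tfrac{n}{m}\expect[\dmin{m}{\T}]=k\beta$, apply the multiplicative Chernoff bound at threshold $\Theta(\log m)\cdot\beta$, union bound over the $m$ machines, and use Markov's inequality on $\dmin{m}{\T}$ for the count of unscheduled jobs. One minor slip first: with $\beta = n\expect[\dmin{m}{\T}]/(km)$ the Markov ratio is $\expect[\dmin{m}{\T}]/\beta = km/n$, not $k/\log m$, so the expected number of unscheduled jobs is $n\cdot(km/n)=km$ exactly, for every $n$; your detour about whether $n=m\log m$ is an artifact of this arithmetic error and is unnecessary.

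The substantive gap is the conversion from the high-probability bound to a bound on the \emph{expected} makespan. You propose to bound the makespan on the bad event (probability at most $1/m$) by the total work $n\beta$ and assert this contributes lower order. But $\tfrac1m\cdot n\beta = \tfrac{n}{km}\cdot\tfrac{n\tau}{m}$ with $\tau=\expect[\dmin{m}{\T}]$, which is $O(\tfrac{\log m}{k})\cdot\tfrac{n\tau}{m}$ only when $n=O(m\log m)$. The lemma places no upper bound on $n$, and it is invoked in Theorem~\ref{thm:llm} under the hypothesis $n\ge m\log m$ with no cap, so the crude tail bound does not close the argument; driving the per-machine failure probability down to $m^{-c'}$ only extends the range to $n=O(m^{c'-1}\log m)$. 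The paper instead uses a bootstrap: letting $\gamma$ denote the expected-makespan approximation factor, it imagines assigning the surviving jobs to their best machines in decreasing order of best runtime and stopping when the makespan first exceeds $\tfrac{7\log m}{k}\OPT$; this happens with probability at most $1/m$, and conditioned on that event the leftover jobs have smaller best runtimes than those already placed, so scheduling them costs at most $\gamma\OPT$ in expectation, giving $\gamma\le 7\log m/k+\gamma/m$ and hence $\gamma=O(\log m/k)$. You need this recursive device (or an equivalent one) to make the expectation bound hold for all $n$.
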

\begin{proof}
  Let us first consider the expected total work of any single machine,
  that is the expected total size of jobs scheduled on that
  machine. Let $Y_{ij}$ be a random variable that takes on the value
  $0$ if job $j$ is not scheduled on machine $i$, and takes on the
  size of $j$ on machine $i$ if the job is scheduled on that
  machine. The probability that $j$ is scheduled on $i$ is no more
  than $1/m$; its expected size on $i$ conditioned on being scheduled
  is at most $\tau=\expect[\dmin m\T$]. Therefore, $\expect[\sum_j
  Y_{ij}]\le \frac{n\tau}{m}$, which in turn is at most the average
  best runtime.

  Note that the $Y_{ij}$'s are independent and bounded random
  variables. So we can apply Chernoff-Hoeffding bounds and use $\beta
  = \frac{n\tau}{km}$ to get
\begin{align*}
\Pr\left[\sum\nolimits_j Y_{ij} >  \tfrac{7\log m}{k}\OPT \right]
\le 
\Pr\left[\sum\nolimits_j Y_{ij} >  \tfrac{7\log m}{k}\expect\left[\sum\nolimits_j Y_{ij}\right]
\right] 
< \exp\left( -\tfrac 13 \tfrac{6\log m}{k}
  \tfrac{n\tau}{\beta m}  \right) = \tfrac{1}{m^2}.
\end{align*}
Taking the union bound over the $m$ machines, we get that with
probability $1-1/m$, the makespan of the sieve mechanism is at most
$O(\log m/k)$ times $\OPT$. 

We will now convert this tail probability into a bound on the expected
makespan. Let $\gamma$ denote the factor by which the expected
makespan of the mechanism 
exceeds $\OPT$. Remove all jobs with best runtimes greater than
$\beta$ from consideration and consider creating sieve's schedule by
assigning each of the leftover jobs to their best machine (minimizing
total work) one-by-one in decreasing order of best runtime, until the
makespan exceeds $\frac{7}{k} \log m$ times $\OPT$. This event happens with a
probability at most $1/m$. When this event happens, we are left with a
smaller set of jobs; conditioned on being left over at this point,
these jobs have a smaller best runtime than the average over all
scheduled jobs. Thus the expected makespan for scheduling them will be
at most $\gamma\OPT$.  So we get $\gamma \leq 7\log m/k + \gamma/m$,
i.e., $\gamma = O(\log m/k)$. This implies the first part of the
lemma.

We now prove the second part of the lemma, i.e., the expected number of jobs
left unscheduled is $km$. Note that $\beta$ exceeds a
job's expected best runtime by a factor of $n/km$. Thus by applying Markov's
inequality, we get the probability of a job's best runtime being larger than
$\beta$ to be at most $km/n$. Hence the expected number of jobs with best
runtime larger than $\beta$ is $km$. 
\end{proof}

Next we will combine the sieve mechanism with the bounded overload
mechanism. We consider two different choices of parameters. Note that
if in expectation the sieve mechanism leaves $km$ jobs unscheduled,
using the bounded overload mechanism to schedule these jobs over a set
of $\Omega(m)$ machines gives us an expected makespan that is at most
$O(k)$ larger than the expected optimal makespan on that number of
machines. In order to balance this with the makespan achieved by
sieve, we pick $k=\sqrt{\log m}$. This gives us
Theorem~\ref{thm:replication}.

 \begin{oneshot}{Theorem~\ref{thm:replication}}
  For $n$ jobs, $m$ machines, and runtimes from an i.i.d.\@
   distribution, the expected makespan of the sieve and bounded
   overload mechanism with overload $\overload = \overloadvalue$, partition
   parameter $\partparam=2/3$, and reserve $\reserve=\frac{n}{m\log m}
   \expect[\dmin{\frac{\partparam}{2} m}{\T}]$, is an $O(\sqrt {\log m})$
   approximation to the larger of the worst best runtime and the
   average best runtime, and hence also to the optimal makespan, on
   $m/3$ machines. Here $\T$ denotes a draw from the distribution on
   job sizes.
 \end{oneshot}

\begin{proof}
  For the choice of parameters in the theorem statement, we use $m/3$
  of the $m$ machines for the sieve mechanism, and the remainder for
  the bounded overload mechanism. The expected makespan of the overall
  mechanism is no more than the sum of the expected makespans of the
  two constituent mechanisms. Lemma~\ref{lem:sieve} implies that the
  expected makespan of the sieve mechanism is \linebreak $O(\sqrt{\log
    m})$ times $\OPT_{1/3}$, and the load factor for the bounded overload mechanism
  is also $O(\sqrt{\log m})$. Theorem~\ref{thm:main-noniid} then
  implies that the expected makespan of the bounded overload mechanism
  is also $O(\sqrt{\log m})$ times $\OPT_{1/3}$.
\end{proof}

If we partition the machines across the sieve and the bounded overload
mechanisms roughly equally, then Theorem~\ref{thm:replication} gives us the optimal
choice for the parameter $\beta$. A different possibility is to
perform a more aggressive screening of jobs by using a smaller
$\beta$, while comparing our performance against a more heavily
penalized optimal mechanism -- one that is allowed to use only a
$\partparam$ fraction of the machines. 

 \begin{oneshot}{Theorem~\ref{thm:llm}}
   For $n \geq m \log m$ jobs, $m$ machines, and runtimes from an
   i.i.d.\@ distribution, the expected makespan of the sieve and
   bounded overload mechanism with overload $\overload = \overloadvalue$, partition
   parameter $\partparam= 1/\log \log m$, and reserve
   $\reserve=\frac{2n}{m\log m} \expect[\dmin {\frac\partparam{2}
     m}\T]$, is a constant approximation to the larger of the worst
   best runtime and the average best runtime, and hence also to the
   optimal makespan, on $\partparam m/2$ machines. Here $\T$ denotes a
   draw from the distribution on job sizes.
 \end{oneshot}
\begin{proof}
  We will show that the expected makespan of the sieve mechanism is at most a
  constant times the average best runtime on $\partparam m/2$
  machines, and the expected number of unscheduled jobs is
  $O(\partparam m)$. The current theorem then follows by applying
  Theorem~\ref{thm:main-noniid}.

  Let us analyze the expected makespan of the sieve mechanism first. Let $\tau=
  \expect[\dmin {\frac{\partparam}{2} m}\T]$. Then we can bound $\OPT_{\partparam/2}$ as
  $\OPT_{\partparam/2}\ge\frac{2n\tau}{\partparam m}$. As in the proof of
  Lemma~\ref{lem:sieve}, let $Y_{ij}$ be a random variable that takes on the
  value $0$ if job $j$ is not scheduled on machine $i$, and takes on
  the size of $j$ on machine $i$ if the job is scheduled on that
  machine. Then, 
\begin{align*}
\expect\left[\sum\nolimits_j Y_{ij}\right] & \le
  \tfrac{n}{(1-\partparam)m}\expect[\dmin {(1-\partparam) m}\T]
\le
  \tfrac{n\tau}{(1-\partparam)m}\le \tfrac{\partparam}{2(1-\partparam)}
  \OPT_{\partparam/2}.
\end{align*}
\noindent
  Applying Chernoff-Hoeffding bounds we get
\begin{align*}
\Pr\left[\sum\nolimits_j Y_{ij} > 2\OPT_{\partparam/2} \right] 
\le 
\Pr\left[\sum\nolimits_j Y_{ij} > 4(1/\partparam - 1)\expect\left[\sum\nolimits_j Y_{ij}\right]
\right] 
< \exp\left( - \tfrac 1{\partparam}
  \tfrac{n\tau}{(1-\partparam) m}\tfrac{1}{\beta}  \right) \le m^{-1/2\partparam}.
\end{align*}
Here we used $\beta=2n\tau/m\log m$.  Taking the union bound over the $m$
machines, we get that with probability $o(1)$, the makespan of the
sieve mechanism is at most twice $\OPT_{\partparam/2}$. Once again, as
in the proof of Lemma~\ref{lem:sieve} we can convert this tail bound into a
constant factor bound on the expected makespan.

Now let us consider the jobs left unscheduled. For any given job, we
will compute the probability that its runtime on all of the
$(1-\partparam)m$ machines is larger than $\beta$. Because $\beta$ is
defined in terms of $\dmin {\frac{\partparam}{2} m}\T$, we will consider the
machines in batches of size $\partparam m/2$ at a time. Using Markov's
inequality, the probability that the job's runtime exceeds $\beta$ on
all machines in a single batch is at most $m\log m/2n$. There are
$2(1/\partparam -1)$ batches in all, so the probability that a job
remains unscheduled is at most $(m\log m/n)(2^{2(1-1/\partparam)})$, which by our
choice of $\partparam$ is $O(\partparam m/n)$.
\end{proof}


\section{Deferred proofs}
\label{sec:probabilistic}
In this section we prove the bounds for random variables and order statistics from Section~\ref{sec:prob}.

 \begin{oneshot}{Lemma~\ref{lem:stoch-dom}}
 Let $X$ be any nonnegative random variable, and $m$,\ $i\le m$ be
 arbitrary integers. Let $\alpha$ be defined such that
 $\Pr[X\le\alpha]=1/m$ (or for discontinuous distributions, 
 $\alpha = \sup\{z:\Pr[X \le z] < 1/m\}$). Then
 $X\ithofn{i}{m}$ is stochastically dominated by
 $\max(\alpha,\dmax{\couplingApprox^i}{\dmin{m/2} X})$.
 \end{oneshot}
\begin{proof}
Let $\dist$ be the cumulative distribution function of $X$. 
We prove this by showing that $X\ithofn{i}{m}$ is ``almost''
stochastically dominated by $\dmax{\couplingApprox^i}{\dmin{m/2} X}$; specifically, we
show that for all $t \geq \alpha$,
\begin{equation*}
\Pr\left[X\ithofn{i}{m} > t\right] \leq \Pr\left[\dmax{\couplingApprox^i}{\dmin{m/2} X} > t\right].
\end{equation*}
To prove this inequality, we will define a process for instantiating
the variables $X\ithofn{i}{m}$ and $\dmax{\couplingApprox^i}{\dmin{m/2} X}$ in a
correlated fashion such that the former is always larger than the
other.

$\dmax{\couplingApprox^i}{\dmin{m/2} X}$ is a statistic based on $\couplingApprox^i m/2$
independent draws of the random variable $X$. Consider partitioning
these draws into $\couplingApprox^i/2$ groups of size $m$ each. We then randomly
split each group into two smaller groups, which we will refer to as
blocks, of size $m/2$ each. Define a good event $\G$ to be the event
that at least one of these $\couplingApprox^i/2$ groups get split such that the $i$
smallest runtimes in it all fall into the same block. If event $\G$
occurs, arbitrarily choose one group which caused event $\G$, and for
all $k$ define $X\ithofn{k}{m}$ to be the $k$th min from this group.
Otherwise, select an arbitrary group to define the $X\ithofn{k}{m}$.
Note that since we split the groups into blocks randomly, and this is
independent of the drawn runtimes in the groups, $X\ithofn{k}{m}$ has
the correct distribution, both when $\G$ occurs and does not occur.
Define the minimum from each of the $\couplingApprox^i$ blocks to be a draw of
$\dmin{m/2} X$.  Thus, whenever $\G$ occurs, the probability that the
$\dmax{\couplingApprox^i}{\dmin{m/2} X} > t$ is at least the probability that
$X\ithofn{i+1}{m} > t$. We have that
\begin{align*}
\Pr\left[\dmax{\couplingApprox^i}{\dmin{m/2} X} > t \right]
&\geq \Pr\left[\G\right]\cdot \Pr\left[X\ithofn{i+1}{m} > t\right]\\
&= \left(\Pr\left[\G\right]\cdot \frac{\Pr\left[X\ithofn{i+1}{m} >
t\right]}{\Pr\left[X\ithofn{i}{m} > t\right]}\right)\cdot\Pr\left[X\ithofn{i}{m}
> t\right].
\end{align*}
We now show that $\left(\Pr\left[\G\right]\cdot \frac{\Pr\left[X\ithofn{i+1}{m}
>t\right]}{\Pr\left[X\ithofn{i}{m} > t\right]}\right) \geq 1$ whenever $\dist(t) \geq 1/m$, which completes our proof of the lemma. Note that 
\begin{align*}
\frac{\Pr\left[X\ithofn{i+1}{m} > t\right]}{\Pr\left[X\ithofn{i}{m} > t\right]}
&= \frac{\sum_{k=0}^i \binom{m}{k}\dist(t)^k(1-\dist(t))^{m-k}}{\sum_{k=0}^{i-1}
\binom{m}{k}\dist(t)^k(1-\dist(t))^{m-k}}\\
&= 1 + \frac{\binom{m}{i}\dist(t)^i(1-\dist(t))^{m-i}}{\sum_{k=0}^{i-1}
\binom{m}{k}\dist(t)^k(1-\dist(t))^{m-k}},
\end{align*}
which we can see is an increasing function of $\dist(t)$. Thus in the 
range $\dist(t) \geq 1/m$, it
attains its minimum precisely at
$\dist(t) = 1/m$. Substituting $F(t) = 1/m$ into the above, and using
standard approximations for $\binom{m}{k}$ (namely $\left(\frac{m}{k}\right)^k
\leq \binom{m}{k} \leq \left(\frac{me}{k}\right)^k$, we have
\begin{align*}
\frac{\Pr\left[X\ithofn{i+1}{m} > t\right]}{\Pr\left[X\ithofn{i}{m} > t\right]}
&\geq 1 +
\frac{\left(\frac{m}{i}\right)^i\left(\frac{1}{m}\right)^i\left(1-\frac{1}{m}\right)^{m-i}}
{\left(1-\frac{1}{m}\right)^m +
{\displaystyle \sum_{k=1}^{i-1}}\left(\frac{me}{k}\right)^k\left(\frac{1}{m}\right)^k\left(1-\frac{1}{m}\right)^{m-k}}\\
&\geq 1 + \frac{i^{-i}}{1 + (i-1)\cdot\max_k(\frac{e}{k})^k}
\geq 1 + \frac{i^{-i}}{1 + (i-1)e}.
\end{align*}
It suffices to show that this last quantity, when multiplied with
$\Pr[\G]$, is at least $1$.  We consider the complement of event $\G$,
call it even $\B$. The event $\B$ occurs only when none of the $\couplingApprox^i/2$
groups split favorably.  The probability that a group splits favorably (for $i \geq 1$)
is 
$\left.
{2\cdot\binom{m-i}{m/2-i}}
\middle/
{\binom{m}{m/2}}
\right. \geq
2^{-(i-1)}$. So we can see that  $\Pr[\B] \leq (1-2^{-(i-1)})^{\couplingApprox^i/2} \leq
e^{-(\couplingApprox/2)^i}$, and thus $\Pr[\G] \geq 1 - e^{-(\couplingApprox/2)^i}$. 
It can be verified that $(1-e^{-(\couplingApprox/2)^i})\cdot\left(1 + \frac{i^{-i}}{1 +
(i-1)e}\right) \geq 1$. 
\end{proof}

\begin{oneshot}{Lemma~\ref{thm:MHR}}
For a random variable $X$ whose distribution satisfies the monotone hazard rate condition, $X$ is stochastically dominated by $r\dmin{r} X$.
\end{oneshot}       
\begin{proof}
The hazard rate function is related to the cumulative distribution
function as $\Pr[X\geq t] = e^{-\int_0^t h(z) \,dz}$.
Likewise, we can write:
\begin{align*}
\Pr[r\dmin{r}X \geq t]  = 
\Pr[\dmin{r}X \geq t/r] 
 = (e^{-\int_0^{t/r} h(z) \,dz})^r =
e^{-r\int_0^{t/r} h(z) \,dz}.
\end{align*}
In order to prove the lemma, we need only show that 
$\int_0^t h(z) \,dz \geq r\cdot\int_0^{t/r} h(z)\,dz$. Since the hazard rate
function $h(z)$ is monotone, the function $\int_0^t h(z)\,dz$ is a convex
function of $t$. The required inequality follows from the definition of convexity.
\end{proof}

 \begin{oneshot}{Lemma~\ref{lem:noniidAdjustment}}
   Let $K_1,\cdots,K_n$ be independent and identically
   distributed integer random variables such that for some constant
   $c>1$, we have $K_j \geq c$ for all $j$, and let
   $\rvw_1,\cdots,\rvw_n$ be arbitrary independent nonnegative
   variables. Then,
 \begin{align*}
 \expect\left[\max\nolimits_j \dmax{K_j}{\rvw_j}\right]
 &\leq
 \tfrac{c}{c-1}\expect\left[K_1\right]\expect\left[\max\nolimits_j \rvw_j\right].
 \end{align*}
 \end{oneshot}
\begin{proof}
 We consider the following process for generating correlated
 samples for $\max_j \rvw_j$ and $\max_j \dmax{K_j} {\rvw_j}$. We first
 independently instantiate $K_j$ for every $j$; recall that these are
 identically distributed variables. Let $k=\sum_j K_j\ge cn$. Then we
 consider all possible $n!$ permutations of these instantiated values. For
 each permutation $\sigma$, we make the corresponding number of
 independent draws of the random variable $\rvw_j$ for all $j$; call
 this set of draws $X_{\sigma}$. In all, we get $k n!$ draws from
 the distributions, that is, $|\cup_{\sigma} X_{\sigma}| = k
 n!$. Exactly $k (n-1)!$ of these draws belong to any particular
 $j$; denote these by $Y_j$.

 Now, the maximum element out of each of the $X_{\sigma}$ sets is an
 independent draw from the same distribution $\max_j
 \dmax{K_j}{\rvw_j}$ is drawn from. We get $n!$ independent samples from that
 distribution. Call this set of samples $X$.

 Next note that each set $Y_j$ contains $k (n-1)!$ independent draws
 from the distribution corresponding to $\rvw_j$. We construct a
 uniformly random $n$-dimensional matching over the sets $Y_j$, and
 from each $n$-tuple in this matching we pick the maximum. Each such
 maximum is an independent draw from the distribution corresponding
 to $\max_j\rvw_j$, and we get $k (n-1)!$ such samples; call this set
 of samples $Y$.

 Finally, we claim that $\expect[\sum_{y\in Y} y] \ge
 (1-1/c)\expect[\sum_{x\in X} x]$, with the expectation taken over the
 randomness in generating the $n$-dimensional matching across the
 $Y_j$s. The lemma follows, since we have $\expect[\sum_{x\in X} x] =
 n!\expect[\max_j \dmax{K_j}{\rvw_j}]$ as well as
\begin{align*}
\expect[\sum_{y\in Y} y] 
&= \expect_{\{K_j\}}[k(n-1)! \expect[\max_j\rvw_j]] 
= n! \expect[K_j]\expect[\max_j\rvw_j].
\end{align*}
 To prove the claim, we call an $x\in X$ ``good'' if the $n$-tuple in
 the matching over $\{Y_j\}$ that it belongs to does not contain any
 other element of $X$. Then, $\expect[\sum_{y\in Y} y] \ge
 \expect[\sum_{x\in X} x\Pr[x\text{ is ``good"}]]$. 

 Let us compute the probability that some $x$ is ``good''. Without
 loss of generality, suppose that $x\in Y_1$. In order for $x$ to be
 good, it's $n$-tuple must not contain any of the other elements of
 $X$ from the other $Y_j$'s. If we define $x_j=|X\cap Y_j|$, then
 $\Pr[x\text{ is ``good"}]$ is at least $\prod_{j\ne 1}
 (1-\frac{x_j}{k(n-1)!})$ where $\sum x_j\le n!$. This product is
 minimized when we set one of the $x_j$s to $n!$ and the rest to $0$,
 and takes on a minimum value of $1-n/k\ge 1-1/c$.
\end{proof}

\begin{oneshot}{Lemma~\ref{lem:corGapVar}}
  Let $X_1,\cdots,X_n$ be arbitrary correlated real-valued random
  variables. Let $Y_1,\cdots,Y_n$ be independent random variables
  defined so that the distribution of $Y_i$ is identical to that of
  $X_i$ for all $i$.  Then, $\expect[\max_j X_j]\le
  \frac{e}{e-1}\expect[\max_j Y_j]$.
\end{oneshot}
\begin{proof}
  We use the following result from \cite{ADSY10} (also implicit in
  \cite{CHMS10}). Let $U$ be a universe of $n$ elements, $f$ a
  monotone increasing submodular function over subsets of this
  universe, and $D$ a distribution over subsets of $U$. Let
  $\tilde{D}$ be a product distribution (that is, every element is
  picked independently to draw a set from this distribution) such that
  $\Pr_{S\sim D}[i\in S]=\Pr_{S\sim \tilde{D}}[i\in S]$. Then
  $\expect_{S\sim D}[f(S)]\le \frac{e}{e-1}\expect_{S\sim
    \tilde{D}}[f(S)]$.

  To apply this theorem, let us first assume that the variables $X_i$
  are discrete random variables over a finite domain. The universe $U$
  will then have one element for each possible instantiation of each
  variable $X_i$ with a value equal to that instantiation. Then any
  joint instantiation of the variables $X_1, \cdots, X_n$ corresponds
  to a subset of $U$; let $D$ denote the corresponding distribution
  over subsets. Let $f$ be the max function over the instantiated
  subset. Then $\expect[\max_j X_j]$ is exactly equal to
  $\expect_{S\sim D}[f(S)]$. As before, let $\tilde{D}$ denote the
  distribution over subsets of $U$ where each element is picked
  independently. Likewise, the random variables $Y_1, \cdots, Y_n$
  define a distribution, say $D'$, over subsets of $U$. Note that
  under $D'$ the memberships of elements of $U$ in the instantiated
  subset are negatively correlated -- for two elements that correspond
  to instantiations of the same variable, including one in the subset
  implies that the other is not included. This raises the expected
  maximum. In other words, $\expect_{S\sim D'}[f(S)]\ge \expect_{S\sim
    \tilde{D}}[f(S)]$. Therefore, we get $\expect[\max_j
  X_j]=\expect_{S\sim D}[f(S)]\le (e/e-1) \expect_{S\sim
    D'}[f(S)]=(e/e-1)\expect[\max_j Y_j]$.

  When the variables $X_j$ are defined over a continuous but bounded
  domain, we can apply the above argument to an arbitrarily fine
  discretization of the variables. Our claim then follows from taking
  the limit as the granularity of the discretization goes to zero.

  Finally, let us address the boundedness assumption. For some
  $\epsilon<1/n^2$, let $B$ be defined so that for all $i$, $\Pr[X_i>
  B]\le \epsilon$. Then the contribution to the expected maximum from
  values above $B$ is similar for the $X$s and the $Y$s: the
  probability that some variable $X_i$ attains the maximum value $b>B$
  is at most $\Pr[X_i=b]$ whereas the probability that the variable
  $Y_i$ attains the maximum value $b>B$ is at least
  $(1-\epsilon)^{n-1}\Pr[Y_i=b]$. Therefore, $\expect[\max_j
  X_j]\le (1+o(\epsilon))(e/e-1)\expect[\max_j Y_j]$. Taking the limit
  as $\epsilon$ goes to zero implies the theorem.
\end{proof}

\paragraph{Comparing $\OPT$ and $\OPTr$}
We now prove Lemma~\ref{lem:MHR}.  The key intuition behind the lemma is that
it can be viewed as the result of scaling both sides of the stochastic
dominance relation of Lemma~\ref{thm:MHR} up by a constant, and as we
shall see, the monotone hazard rate condition is retained by the
minimum among multiple draws from a probability distribution.
 \begin{oneshot}{Lemma~\ref{lem:MHR}} 
   When the distributions of job sizes have monotone hazard rates the
   expected worst best and average best runtimes on $\partparam m$ machines are
   no more than $1/\partparam^2$ times the expected worst best and average best
   runtimes respectively on $m$ machines. 
 \end{oneshot}       

\begin{proof}
  We will show that the random variable $\dmin{\partparam m}\Tj$ is
  stochastically dominated by $\frac 1{\partparam}\dmin{m}\Tj$. Then,
  the expected worst best runtime with $\partparam m$ machines is no
  more than $1/\delta$ times the expected worst best runtime with $m$
  machines. Likewise, the expected average best runtime with
  $\partparam m$ machines is no more than $1/\delta^2$ times the
  expected average best runtime with $m$ machines. (The extra
  $1/\delta$ factor comes about because we average over $\delta m$
  machines for the former, versus over $m$ machines for for the
  latter.)

Our desired stochastic dominance relation is precisely of the form
given by Lemma~\ref{thm:MHR}.  In particular, observe that taking a
minimum among $m$ draws is exactly the same as first splitting the $m$
draws into $1/\partparam$ groups, selecting the minimum from each
group of $\partparam m$ draws, and then taking the minimum from this
collection of $1/\partparam$ values.  Thus, we can see that
$(1/\partparam)\dmin{m}\Tj=(1/\partparam)\dmin{1/\partparam}{\dmin{\partparam
m}\Tj}$, and so the claim follows immediately from Lemma~\ref{thm:MHR}
as long as the distribution of $\dmin{\partparam m}\Tj$ has a monotone
hazard rate.  We show in Claim~\ref{cl:minMHR} below that the first order statistic of
i.i.d. monotone hazard rate distributions also has a monotone hazard
rate.
\end{proof} 


\begin{claim}\label{cl:minMHR}
A distribution $F$ has a monotone hazard rate if and only if the distribution of the 
minimum among $k$ draws from $F$ has a monotone hazard rate.
\end{claim}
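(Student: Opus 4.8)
The plan is to compute the hazard rate of the minimum explicitly and observe that it is simply a constant rescaling of the hazard rate of $F$. Write $X_{\min}$ for the minimum of $k$ i.i.d.\ draws from $F$, and let $h(x)=f(x)/(1-F(x))$ denote the hazard rate of $F$. By independence, $\Pr[X_{\min}>t]=(1-F(t))^k$, so when $F$ has a density $f$ the density of $X_{\min}$ is $g(t)=k(1-F(t))^{k-1}f(t)$. Dividing, the hazard rate of $X_{\min}$ is
\[
h_{\min}(t)=\frac{g(t)}{\Pr[X_{\min}>t]}=\frac{k(1-F(t))^{k-1}f(t)}{(1-F(t))^k}=k\cdot\frac{f(t)}{1-F(t)}=k\,h(t).
\]
Since $k>0$ is a constant, $h_{\min}$ is non-decreasing if and only if $h$ is non-decreasing, which establishes both directions of the claim simultaneously.

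To handle distributions without densities (and to avoid delicate statements about the endpoints of the support), I would instead run the same argument through the cumulative hazard function $H(t):=-\ln\Pr[X>t]$. It is standard that $F$ has a monotone (non-decreasing) hazard rate if and only if $H$ is convex. The identity $\Pr[X_{\min}>t]=(1-F(t))^k$ gives $H_{\min}(t)=-\ln\!\left((1-F(t))^k\right)=k\,H(t)$, so $H_{\min}$ is convex if and only if $H$ is convex, and the equivalence follows. Only the forward direction ($F$ MHR $\Rightarrow$ minimum MHR) is needed in the proof of Lemma~\ref{lem:MHR}; the converse comes for free from the same identity.

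There is no real obstacle here: the entire content is the elementary identity $h_{\min}=k\,h$ (equivalently $H_{\min}=k\,H$). The only care needed is bookkeeping around the definition of the hazard rate on the support and for discontinuous distributions, which the cumulative-hazard formulation sidesteps.
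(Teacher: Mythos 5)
Your first computation is exactly the paper's proof: the hazard rate of the minimum of $k$ draws equals $k\,h(x)$, so monotonicity is preserved in both directions. The cumulative-hazard reformulation you add is a harmless (and slightly more robust) restatement of the same identity, not a different argument.
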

\begin{proof}
 Let $F_k$ denote the cdf for minimum among $n$ draws from $F$. Then
 we have $F_k(x) = 1 - (1-F(x))^k$, and the corresponding $f_k(x) =
 k(1-F(x))^{k-1}f(x)$. Thus the hazard rate function is:
 $$h_k(x) = \frac{f_k(x)}{1-F_k(x)} = \frac{k(1-F(x))^{k-1}f(x)}{(1-F(x))^k} =
 k\frac{f(x)}{1-F(x)}.$$
 This is precisely $k$ times the hazard rate function $h(x)$, and
 therefore, $h_k(x)$ is monotone increasing if and only if $h(x)$ is.
\end{proof}

\section{Conclusions}
Non-linear objectives coupled with multi-dimensional preferences
present a significant challenge in mechanism design. Our work shows
that this challenge can be overcome for the makespan objective when
agents (machines) are a priori identical.  This suggests a number of
interesting directions for follow-up. Is the gap between the
first-best and second-best solutions (i.e.~the cost of incentive
compatibility) still small when agents are not identical? Does
knowledge of the prior help?  Note that this question is meaningful
even if we ignore computational efficiency. On the other hand, even if
the gap is small, the optimal incentive compatible mechanism may be
too complex to find or implement. In that case, can we approximate the
optimal incentive compatible mechanism in polynomial time?

Similar questions can be asked for other non-linear objectives. One
particularly interesting objective is max-min fairness, or in the
context of scheduling, maximizing the running time of the least loaded
machine. Unlike for makespan, in this case we cannot simply
``discard'' a machine (that is, schedule no jobs on it) without
hurting the objective. This necessitates techniques different from the
ones developed in this paper.

\bibliographystyle{plain}
\bibliography{msp}

\begin{thebibliography}{10}

\bibitem{ADSY10}
Shipra Agrawal, Yichuan Ding, Amin Saberi, and Yinyu Ye.
\newblock Correlation robust stochastic optimization.
\newblock In {\em Proceedings of the Twenty-First Annual ACM-SIAM Symposium on
  Discrete Algorithms}, SODA '10, pages 1087--1096, 2010.

\bibitem{AT01}
Aaron Archer and {\'E}va Tardos.
\newblock Truthful mechanisms for one-parameter agents.
\newblock In {\em FOCS}, pages 482--491, 2001.

\bibitem{ADL09}
Itai Ashlagi, Shahar Dobzinski, and Ron Lavi.
\newblock An optimal lower bound for anonymous scheduling mechanisms.
\newblock In {\em Proceedings of the 10th ACM conference on Electronic
  commerce}, EC '09, pages 169--176, New York, NY, USA, 2009. ACM.

\bibitem{BK96}
Jeremy Bulow and Paul Klemperer.
\newblock Auctions vs negotiations.
\newblock {\em American Economic Review}, 86(1):180--194, 1996.

\bibitem{CHMS10}
S.~Chawla, J.~Hartline, D.~Malec, and B.~Sivan.
\newblock Sequential posted pricing and multi-parameter mechanism design.
\newblock In {\em STOC}, 2010.

\bibitem{CIL12}
Shuchi Chawla, Nicole Immorlica, and Brendan Lucier.
\newblock On the limits of black-box reductions in mechanism design.
\newblock In {\em STOC}, pages 435--448, 2012.

\bibitem{CKK07}
George Christodoulou, Elias Koutsoupias, and Annam{\'a}ria Kov{\'a}cs.
\newblock Mechanism design for fractional scheduling on unrelated machines.
\newblock In {\em ICALP}, pages 40--52, 2007.

\bibitem{CKV07}
George Christodoulou, Elias Koutsoupias, and Angelina Vidali.
\newblock A lower bound for scheduling mechanisms.
\newblock In {\em SODA}, pages 1163--1170, 2007.

\bibitem{C71}
E.~Clarke.
\newblock Multipart pricing of public goods.
\newblock {\em Public Choice}, 11:17--33, 1971.

\bibitem{CFFKO10}
Edith Cohen, Michal Feldman, Amos Fiat, Haim Kaplan, and Svetlana Olonetsky.
\newblock Envy-free makespan approximation: extended abstract.
\newblock In {\em Proceedings of the 11th ACM conference on Electronic
  commerce}, EC '10, pages 159--166, 2010.

\bibitem{DHKN11}
Nikhil~R. Devanur, Jason~D. Hartline, Anna~R. Karlin, and C.~Thach Nguyen.
\newblock Prior-independent multi-parameter mechanism design.
\newblock In {\em WINE}, pages 122--133, 2011.

\bibitem{DDDR08}
Peerapong Dhangwatnotai, Shahar Dobzinski, Shaddin Dughmi, and Tim Roughgarden.
\newblock Truthful approximation schemes for single-parameter agents.
\newblock In {\em FOCS}, pages 15--24, 2008.

\bibitem{DRY10}
Peerapong Dhangwatnotai, Tim Roughgarden, and Qiqi Yan.
\newblock Revenue maximization with a single sample.
\newblock In {\em ACM Conference on Electronic Commerce}, pages 129--138, 2010.

\bibitem{FL12}
Amos Fiat and Ariel Levavi.
\newblock Tight lower bounds on envy-free makespan approximation.
\newblock {\em CoRR}, abs/1205.1786, 2012.

\bibitem{G73}
Theodore~Francis Groves.
\newblock Incentives in teams.
\newblock {\em Econometrica}, 41(4):617--31, 1973.

\bibitem{HL10}
J.~Hartline and B.~Lucier.
\newblock Bayesian algorithmic mechanism design.
\newblock In {\em STOC}, 2010.

\bibitem{HR09}
J.~Hartline and T.~Roughgarden.
\newblock Simple versus optimal mechanisms.
\newblock In {\em ACM Conference on Electronic Commerce}, pages 225--234, 2009.

\bibitem{KV07}
Elias Koutsoupias and Angelina Vidali.
\newblock A lower bound of 1+{\it phi} for truthful scheduling mechanisms.
\newblock In {\em MFCS}, pages 454--464, 2007.

\bibitem{LS07}
Ron Lavi and Chaitanya Swamy.
\newblock Truthful mechanism design for multi-dimensional scheduling via cycle
  monotonicity.
\newblock In {\em ACM Conference on Electronic Commerce}, pages 252--261, 2007.

\bibitem{Lu09}
Pinyan Lu.
\newblock On 2-player randomized mechanisms for scheduling.
\newblock In {\em WINE}, pages 30--41, 2009.

\bibitem{LY08-2}
Pinyan Lu and Changyuan Yu.
\newblock An improved randomized truthful mechanism for scheduling unrelated
  machines.
\newblock In {\em STACS}, pages 527--538, 2008.

\bibitem{LY08}
Pinyan Lu and Changyuan Yu.
\newblock Randomized truthful mechanisms for scheduling unrelated machines.
\newblock In {\em WINE}, pages 402--413, 2008.

\bibitem{M09}
Ahuva Mu'Alem.
\newblock On multi-dimensional envy-free mechanisms.
\newblock In {\em Proceedings of the 1st International Conference on
  Algorithmic Decision Theory}, ADT '09, pages 120--131, 2009.

\bibitem{MS07}
Ahuva Mu'alem and Michael Schapira.
\newblock Setting lower bounds on truthfulness: extended abstract.
\newblock In {\em SODA}, pages 1143--1152, 2007.

\bibitem{NR99}
Noam Nisan and Amir Ronen.
\newblock Algorithmic mechanism design (extended abstract).
\newblock In {\em STOC}, pages 129--140, 1999.

\bibitem{RTY12}
Tim Roughgarden, Inbal Talgam-Cohen, and Qiqi Yan.
\newblock Supply-limiting mechanisms.
\newblock In {\em ACM Conference on Electronic Commerce}, pages 844--861, 2012.

\bibitem{V61}
W.~Vickrey.
\newblock Counterspeculation, auctions and competitive sealed tenders.
\newblock {\em Journal of Finance}, pages 8--37, 1961.

\bibitem{Y11}
Qiqi Yan.
\newblock Mechanism design via correlation gap.
\newblock In {\em Proceedings of the Twenty-Second Annual ACM-SIAM Symposium on
  Discrete Algorithms}, SODA '11, pages 710--719, 2011.

\end{thebibliography}


\end{document}